\newtheorem{theorem}{Theorem}[section]
\newtheorem{lemma}[theorem]{Lemma}
\newtheorem{proposition}[theorem]{Proposition}
\newtheorem{definition}[theorem]{Definition}
\newtheorem{example}[theorem]{Example}
\newcommand{\continuation}{??}
\newenvironment{examplecont}[1]{
  \renewcommand{\continuation}{\ref{#1}}
  \excont[Continued]
}{\endexcont}
\newcommand{\ubar}[1]{\underaccent{\bar}{#1}}
\newcommand\ubaralt[1]{\stackunder[1pt]{$#1$}{\rule{.6ex}{.1ex}}}
\newcommand{\newterm}[1]{\textit{#1}}
\title{\bf Price Cycles in Ridesharing Platforms}
\author{Chenkai Yu\thanks{Columbia University, NY, USA, \texttt{cyu26@gsb.columbia.edu}} \qquad Hongyao Ma\thanks{Columbia University, NY, USA, \texttt{hongyao.ma@columbia.edu}} \qquad Adam Wierman\thanks{Caltech, CA, USA, \texttt{adamw@caltech.edu}}}
\date{}
\begin{document}

\maketitle

\begin{abstract}
  In ridesharing platforms such as Uber and Lyft, it is observed that drivers sometimes collaboratively go offline when the price is low, and then return after the price has risen due to the perceived lack of supply. This collective strategy leads to cyclic fluctuations in prices and available drivers, resulting in poor reliability and social welfare. We study a continuous time, non-atomic model and prove that such online/offline strategies may form a Nash equilibrium among drivers, but lead to a lower total driver payoff if the market is sufficiently dense. Further, we show how to set price floors that effectively mitigate the emergence and impact of price cycles.
\end{abstract}

\section{Introduction} \label{sec:intro}

Ridesharing platforms have seen great success in recent years. Compared with traditional taxi systems, these platforms significantly increase the fraction of time when drivers have a rider in the back seat~\citep{cramer2016disruptive}.
In addition to mobile apps, which enable more efficient matching between riders and drivers, platforms such as Uber and Lyft also employ dynamic ``surge'' pricing to balance supply and demand in real-time.
This guarantees that rider wait times do not exceed a few minutes~\citep{rayle2014app}, achieving a desirable trade-off between (i) the cost of maintaining open driver supply, and (ii) the cost of long waiting time for riders and pick-up time for drivers~\citep{castillo2017surge,yan2020dynamic}.

In the presence of surge pricing, and with the ``real-time flexibility'' to decide when and where to drive~\citep{hall2016analysis,chen2019value}, drivers often strategically optimize to increase their earnings instead of accepting all trip dispatches from the platforms.
For example, many drivers cherry-pick based on the lengths or the destinations of trips~\citep{garg2020driver,castro2021randomized,changeRuleAdjustStrategy}. When prices fail to be sufficiently \emph{smooth} in space and time, drivers also decline trips, chasing surge prices in neighboring areas, or going off-line before large events will end in anticipation of a price hike~\citep{ma2019spatio}.

Increasingly, the strategic behavior of drivers is not limited to individuals. Collective strategic behavior of groups of drivers supplying the same region has been reported in recent years. A prominent collective strategy for drivers is to all go offline at the same time and return only after the price surges. For example, as reported by ABC News~\citep{WJLA2019ReganNationalCollusion,dustinTalk}:
\begin{quote}
  Every night, several times a night, Uber and Lyft drivers at Reagan National Airport simultaneously turn off their ride share apps for a minute or two to trick the app into thinking there are no drivers available---creating a price surge. When the fare goes high enough, the drivers turn their apps back on and lock into the higher fare.
\end{quote}

Such online/offline strategies have also been discussed in online forums by drivers from many cities. For example, drivers discussed in the UberPeople London forum~\citep{uberPeopleLondon}
\begin{quote}
  \say{Guys stay logged off until surge} because \say{Less supply high demand = surge.}
\end{quote}
As another example, when discussing the strategy of leaving the app off until it suges 2.5x at the Los Angeles International Airport, a driver posted~\citep{uberPeopleLA}
\begin{quote}
  \say{Can someone create a huge sign with a posterboard that says \say{stay logged off until 2.5x}. Like people taking turns holding the sign right at the entrance...}
\end{quote}

When adopted by drivers supplying the same region, these strategies induce cyclic fluctuations in trip prices and the number of available drivers, i.e., \emph{price cycles}. 
Specifically, after all drivers have gone offline, the platform will gradually increase trip prices due to the perceived lack of supply.  Once the price has risen high enough, the drivers (including those who arrived in the region during the price hike) sign back on at the same time, which leads the trip price to fall.  Eventually, the price falls enough to prompt the drivers in the region to sign offline again.%
\footnote{In practice, instead of letting prices jump from base fare to maximum surge, platforms usually limit the rate of price changes for various reasons. Erratic prices lead to poor experiences for riders~\citep{dholakia2015everyone}. Moreover, in addition to balancing supply and demand, surge pricing also provides incentives for drivers to reposition themselves efficiently. Driving from one region to another takes time; thus prices that change too fast are less effective for inducing driver movement~\citep{uberDriverSurge,garg2020driver}.
}

Such price cycles undercut the platforms' mission of providing reliable transportation to riders~\citep{ubermissionChicago,lyftmission} --- when prices are increasing and drivers are not accepting dispatches, 
even riders with a high willingness-to-pay may be unable get access to reliable service. Further, from the drivers' perspective, while this behavior is sometimes described as drivers ``collectively gaming the system'' in order to ``regain their autonomy'' (see e.g.,  \citet{mohlmann2017hands}), it is not clear whether it leads to increased %
total utility for %
drivers in practice.

Despite attention from the press, the emergence of price cycles has not been studied from the perspective of the design and operation of ridesharing platforms. 
In this paper, we initiate such a study and seek insight into questions such as:
\begin{enumerate}[label=(\roman*)] \it
    \item Is it an equilibrium for all drivers to adopt these online/offline strategies? %
    \item Do drivers collectively benefit from such strategies?
    \item What are the impacts of price cycles on riders and the efficiency of the platform?
    \item How can platforms avoid the emergence and reduce the impact of price cycles?
\end{enumerate}

Addressing these questions is challenging due to the fact that the emergent price cycles are inherently dynamic, even when the market condition is stationary over time. 
Understanding price cycles and the corresponding market equilibria thereby requires characterizing drivers' best-response in non-stationary, dynamic settings. %
More broadly, robust control in the presence of strategic participants is relevant and important for settings beyond ridesharing. 
As an example, it is reported that many drivers on the food delivery platform DoorDash have been collectively declining any delivery job, until the platform raises 
driver payment for the job to at least \$7~\citep{doordashGaming}.

\subsection{Contributions}

In this paper, we answer the four questions highlighted above using a continuous time
model of pricing and matching by a ridesharing platform in a specific region (e.g., a city center).\footnote{The focus on trips originating from a specific region, modeling scenarios where price cycles emerge in practice (e.g. when many riders leave restaurants and bars in the city center at the end of the evening).}
We show that it may be a Nash equilibrium among drivers to collectively withhold supply %
and accept trips only after prices have risen. 
For markets that are sufficiently dense, however, we prove that the resulting price cycles \emph{reduce}, instead of improve, the total payoff of drivers. %
To mitigate the adverse impact on driver payoffs, as well as on social-welfare, we show how a platform can introduce ``price floors'' (i.e. lower bounds on trip prices) and effectively prevent the emergence of stable price cycles.

\paragraph{The Model}

We work in a non-atomic model where the arrival of driver supply and rider demand are stationary.
After arriving in the region, drivers join the pool of available drivers, and may (at any point of time) decide whether to stay offline or to remain online and accept trip dispatches. 
Upon arrival, a rider who requests a trip is matched to the closest online driver in the region. A higher number of online drivers increases the density of drivers in space and reduces the average \emph{en route time}, i.e. the waiting time for the riders / pick-up time for the drivers.

Waiting is costly for both riders and drivers.
The platform aims to optimize \emph{social welfare} in steady state, i.e., total rider value minus the total waiting costs incurred by riders and drivers.%
\footnote{
We focus in this paper on the optimization of social welfare instead of revenue. With the substantial network effect as well as the existing fierce competition, major platforms such as Uber and Lyft have been prioritizing growth instead of profit, and welfare optimization is aligned with this goal. %
} 
The socially optimal steady state (SOSS) balances the cost of open driver supply and the costs of long en route times. 
After a shock on the supply and/or the demand side that pushes the system away from the SOSS, the platform sets trip prices in order to guide the system back to the SOSS quickly subject to the constraint that prices must adjust smoothly over time.
In this setting, staying online and accepting dispatches at all times does not form an equilibrium among the drivers: when the platform aggressively reduces prices in order to use up excess drivers, for example, a driver may benefit from going offline and waiting for a better price.

\paragraph{Main Results}

We analyze a family of \emph{online/offline threshold strategies} on which drivers coordinate in practice~\citep{uberPeopleLA,WJLA2019ReganNationalCollusion}, i.e., strategies where drivers go offline when prices drop below some threshold and return online after the price has risen above some target level. 
We illustrate the cyclic fluctuations of trip prices and the number of online drivers induced by such strategies. %
We also provide sufficient conditions under which a price cycle is \emph{stable}, meaning that the cycle repeats itself over time and that the online/offline strategy forms a Nash equilibrium among the drivers (Theorem~\ref{thm:suf_cond}).

For markets that are sufficiently dense, however, every stable price cycle leads to a lower total driver payoff in comparison to the intended SOSS outcome (Theorem~\ref{thm:lower_payoff}).
This is because despite the fact that some drivers are paid higher than the SOSS price, many other drivers are paid lower amounts over the duration of one price cycle. In aggregate, we prove that drivers get a lower payoff from the trips, and at the same time incur a higher total waiting cost. %
Via examples, we illustrate the substantially decreased welfare and driver surplus under price cycles. Riders, on the other hand, may collectively get a higher surplus than that under the SOSS, since the price paid by riders who are picked-up can be lower on average.

A natural approach for reducing the emergence and impact of price cycles is to introduce a \emph{price floor}. Intuitively, setting a lower bound on trip prices prevents drivers' payoffs from dropping too low, thereby reducing drivers' incentives to go offline and wait for a better price.
We characterizes the market conditions under which price floors are effective, and provide the set of price floors that ensure no stable price cycles exist (Theorem~\ref{thm:platform_policy}). A simple rule of thumb from our analysis is that a platform can eliminate stable cycles by imposing a lower bound on prices at the point where rider demand doubles in comparison to that under the SOSS.

\subsection{Related Work} \label{sec:related_work}

The literature on ridesharing platforms is rapidly growing, covering both empirical studies of current platforms and theoretical analysis of market designs. Empirical work has studied  a wide range of topics, from consumer surplus~\citep{cohen2016using,castillo2020benefits} and the labor market of %
drivers~\citep{hall2017labor,hall2016analysis} to %
the %
flexible work arrangements~\citep{chen2019value,chen2020reservation,xu2020empirical}, among other topics. %
A variety of topics have also received attention from the theoretical modeling community, including work analyzing the optimal growth of two-sided platforms~\citep{lian2021optimal, fang2019prices}, competition between platforms~\citep{lian2021larger,DBLP:conf/wine/AhmadinejadNSSS19,fang2020loyalty}, operations in the presence of %
autonomous vehicles~
\citep{ostrovsky2019carpooling,lian2020autonomous}, and utilization-based minimum wage regulations~\citep{asadpour2019minimum}.

This paper connects to the literature on pricing and matching in ridesharing platforms. \citet{bimpikis2019spatial} and \citet{besbes2020surge} study revenue-optimal pricing when supply and demand are imbalanced in space. \citet{ashlagi2018maximum}, \citet{dickerson2018allocation} and \citet{aouad2020dynamic} focus on matching between riders and drivers and the pooling of shared rides, taking into consideration the online arrival of supply and demand in space. Further, \citet{kanoria2020blind}, \citet{qin2020ride} and \citet{ozkan2020dynamic} design policies that dispatch drivers from areas with relatively abundant supply, while \citet{cai2019role} and \citet{pang2017efficiency} look at the role of information availability and transparency in platform design.
Dynamic pricing~\citep{BanJohRiq2015Pricing}, state-dependent dispatching~\citep{banerjee2018state,castro2020matching}, driver admission control~\citep{afeche2018ride} and capacity planning~\citep{besbes2018spatial} are also studied using queueing-theoretical models.

In regard to dynamic ``surge'' pricing, \citet{castillo2017surge} and \citet{yan2020dynamic} demonstrate the importance of dynamic pricing in maintaining the spatial density of open driver supply, which improves operational efficiency by reducing waiting times for riders / pick-up times for drivers. Further, empirical studies have demonstrated the effectiveness of dynamic surge pricing in improving reliability and efficiency~\citep{hall2015effects}, increasing driver supply during high-demand times~\citep{chen2015dynamic}, as well as creating incentives for drivers to relocate to higher surge areas~\citep{lu2018surge}.
In recent work, \citet{freund2021pricing} discuss the cycles of volatile driver supply, which lead to an increased average waiting times for riders and pick-up costs for drivers. %
These cycles are a result of the dynamic pricing mechanism with only two prices (high and low), instead of the strategic behavior of drivers.

Our work focuses on collective strategic behavior by drivers in response to the dynamic, state-dependent pricing. To the best of our knowledge, no existing work in the literature has studied the collective strategic behavior of groups of drivers in such a setting.  Work to this point on strategic behavior of drivers has focused on individual drivers. For example, \citet{cook2018gender} discuss learning-by-doing and the gender earnings gap; \citet{ma2019spatio} propose origin-destination based pricing that is welfare optimal and incentive compatible in the presence of spatial imbalance and temporal variation of supply and demand; \citet{garg2019driver} show that additive instead of multiplicative ``surge'' pricing is more incentive aligned for drivers when prices need to be origin-based only; \citet{rheingans2019ridesharing} study pricing in the presence of driver location preferences; and \citet{castro2021randomized} demonstrate how to use drivers' waiting times to align incentives and reduce inequity in earnings when some trips are necessarily more lucrative than the others due to operational constraints.

\section{Model \& Preliminaries} \label{sec:preliminaries}

We study a model of the interaction of drivers and riders with a ridesharing platform, and focus on the pricing and matching for trips originating from one specific region (neighborhood), such as a city center. We consider continuous time and assume that rider demand and driver supply are stationary and non-atomic. More specifically, drivers arrive at random locations in the region (e.g. where they completed their previous trip) at a rate of $\lambda_d > 0$, joining existing drivers in the region upon arrival. 
Riders arrive in the region with a rate of $\lambda_r > 0$, each looking for a trip out of the region originating from a random location in the region. This focus on a single region models situations where price cycles tend to emerge in ridesharing platforms, e.g., when a majority of riders are leaving restaurants and bars in the city center at the end of the evening.

At time $t$, the platform sets a price $p = p(t)$ for all riders requesting a trip at time $t$ and all drivers who are dispatched at time $t$. Upon the arrival of a rider, the platform dispatches the rider's trip to the closest available driver in the region, if the rider accepts the current price.
After a rider trip is dispatched to and accepted by a driver, it takes some time for the driver to pick up the rider. We denote this \newterm{en route time} (or pick-up time) as $\eta(n)$, where $n=n(t)$ is the number of available drivers in the region at time $t$. When there is a higher density of available drivers in the region, it is more likely that a rider can be matched with a nearby driver, hence the en route time is lower on average. We adopt the form $\eta(n) = \tau n^{-\alpha}$, where $\tau > 0$ can be interpreted as the average en route time when there is a single available driver in the region, and $\alpha$ is usually between $\nicefrac{1}{3}$ and $\nicefrac{1}{2}$ in practice~\citep{yan2020dynamic}. In theory, if the dispatch is to the closest drivers, then we get the above expression of $\eta$ with $\alpha = \nicefrac{1}{2}$ when demand and supply are both uniformly distributed in space~\citep{besbes2018spatial}.

Riders' value for a trip is a random variable $V$ with cumulative distribution function (CDF) $F(v)$.\footnote{A rider's value for a trip represents the rider's willingness to pay for the trip, minus the costs incurred by a driver while completing the trip, e.g., the costs of time, fuel, wear and tear, etc. Accordingly, the trip price $p$ in this paper is the payment made by the riders minus a ``base payment'' that is paid directly to the drivers to cover their costs.}
After being matched with a driver, a rider incurs a \newterm{waiting cost} of $c_r \geq 0$ per unit of time while waiting to be picked up by the driver. As a result, a rider with realized value $V = v$ has an expected utility of $v - p - c_r \eta(n)$ from a trip, when there are $n$ available drivers and her price for the trip is $p$. A rider therefore requests a trip if and only if $v \geq p + c_r \eta(n)$. We refer to $p + c_r \eta(n)$ as the \newterm{effective cost} of a trip for the riders. With $n$ available drivers, the \newterm{total rider demand per unit of time} at price $p$ is $\lambda_r \Pr[V \geq p + c_r \eta(n)] = \lambda_r \bar F(p + c_r \eta(n))$, where $\bar F(v) \triangleq 1 - F(v)$.

\newterm{Drivers' opportunity cost} is $c_d \geq 0$ per unit of time, modeling the value of their forgone outside option (e.g. driving elsewhere in the city for the same platform) while waiting for a trip dispatch in this region or driving to pick up a rider. %
We assume that there is no heterogeneity in drivers' continuation earnings after they have completed a rider trip and that drivers do not have preference over trip destinations. The expected \newterm{net payoff} of a driver after accepting a trip at price $p$ is therefore $p - c_d \eta(n)$, if there are $n$ available drivers in the region at the time of dispatch.%
\footnote{We assume that drivers' decisions on whether to accept trips are based on the \emph{expected} en route time $\eta(n)$, implicitly assuming that drivers are unable to cherry-pick rider trips based on their distance to the pick-up locations. This is aligned with practice since drivers are often penalized for declining trip dispatches~\citep{lyft2021acceptance,uberAutoOffline}. See Section~\ref{sec:avoiding_cycles} for a discussion on the impact of allowing drivers to freely choose trips they would like to accept.
}
At any point in time, an available driver may decide to remain online and accept trip dispatches, or temporarily go offline and wait for a future time to come online again. A strategic driver will act to optimize her utility (i.e., total payoff) from the trip she accepts from the current region, which is equal to the net payoff from the trip minus %
the cost she incurred waiting for this trip dispatch.

We consider a setting where the platform has full information about the en route time, riders' and drivers' arrival rates and waiting costs, the distribution of the rider values, and the present number of online drivers. However, the platform does not price trips based on the number of drivers that are offline waiting to return at a later moment. Drivers have the same information about market conditions in the region, and this is common knowledge among the drivers. Additionally, drivers are also aware of the number of offline drivers, reflecting the fact that the offline-online behavior is typically coordinated among the drivers when cycles emerge. 

For the readers' convenience, we provide a table of notations in \Cref{sec:notations}.

\subsection{System Dynamics and Steady States}

With the above model, we can now characterize the evolution of the number of available drivers in the region, $n(t) \ge 0$, over time $t \in \R$. 
Assume that drivers are non-strategic, %
stay online after arrival and do not leave the region (we model drivers' strategic behavior in \Cref{sec:defn_cycles}). Given price $p(t)$, the total rider demand is $\lambda_r \bar F\big(p(t) + c_r \eta(n(t))\big)$ per unit of time, thus the derivative of the number of drivers can be written as
\begin{equation}
  \dot n(t) = \lambda_d - \lambda_r \bar F \big(p(t) + c_r \eta(n(t))\big). \label{eq:dn_dt}
\end{equation}
The system state is characterized by the pair $(p(t),~n(t))$, where $p(t)$ is determined by the design of the platform. The goal in designing $p(t)$ is to drive the system to a steady state that optimizes the social welfare, i.e. the total rider utility minus driver costs.

\begin{definition}[Socially Optimal Steady State (SOSS)]
  A system state $(p, n)$ is said to be steady if the number of drivers is not changing, i.e., $\lambda_d - \lambda_r \bar F(p + c_r \eta(n)) = 0$. The socially optimal steady state is the steady state that maximizes social welfare per unit of time: %
  \begin{equation} 
    w \triangleq \lambda_r \bar F(p + c_r \eta(n)) \big(\E[V][V \geq p + c_r \eta(n)] - (c_r + c_d) \eta(n) \big) - c_d n. \label{eq:social_welfare}
  \end{equation}
\end{definition}

We can obtain intuition for \eqref{eq:social_welfare} by noting that $\lambda_r \bar F(p + c_r \eta(n))$ is the rider demand per unit of time (i.e., the matching rate);\footnote{We assume that the matching rate is zero if there is no available driver in the region, regardless of riders' cost of time $c_r$. %
Technically, this effectively assumes that $c_r \eta(n(t)) = \infty$ when $c_r = 0$ and $n(t) = 0$ (%
in which case $\eta(n(t)) = \infty$).
}
$\E[V][V \geq p + c_r \eta(n)]$ is the expected value of a trip for riders who request a trip at price $p$ when there are $n$ available drivers; $(c_r + c_d) \eta(n)$ is the total cost incurred by the rider and the driver due to the en route time of a trip; and $c_d n$ is the cost of the available drivers waiting to be dispatched. %

A closed form characterization of the SOSS is provided as follows.

\begin{proposition}\label{prop:SOSS}
The socially optimal steady state $(p^*, n^*)$ is given by
  \begin{align}
    n^* & = \left(\frac{\alpha \tau \lambda_d (c_r + c_d)}{c_d}\right)^{\frac{1}{\alpha + 1}}, \label{eq:n^*} \\
    p^* & = \bar F^{-1} \left(\frac{\lambda_d}{\lambda_r}\right) - c_r \eta(n^*). \label{eq:p^*}
  \end{align}
\end{proposition}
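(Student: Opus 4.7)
The plan is to exploit the steady state constraint to reduce a two variable optimization to a one dimensional one. In any steady state, the matching rate equals $\lambda_d$, i.e.\ $\lambda_r \bar F(p + c_r \eta(n)) = \lambda_d$, so the \emph{effective cost} $e \triangleq p + c_r \eta(n)$ is forced to equal the constant $\bar F^{-1}(\lambda_d/\lambda_r)$. This single identity pins down $p$ as a function of $n$ and will eventually yield \eqref{eq:p^*} once the optimal $n^*$ is found.

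Substituting this into the welfare expression \eqref{eq:social_welfare}, the matching rate $\lambda_r \bar F(e) = \lambda_d$ and the conditional expectation $\E[V \mid V \ge e]$ become constants that do not depend on which steady state is chosen. Therefore, up to an additive constant, maximizing $w$ over steady states is equivalent to minimizing
\[
  g(n) \;\triangleq\; \lambda_d (c_r + c_d)\,\eta(n) + c_d\, n \;=\; \lambda_d (c_r + c_d)\,\tau\, n^{-\alpha} + c_d\, n
\]
over $n > 0$. The first term is the total en route waiting cost per unit of time incurred jointly by the matched rider and driver (with mass $\lambda_d$), and the second term is the cost of idle drivers in the pool.

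The first order condition $g'(n) = -\alpha \lambda_d (c_r + c_d) \tau n^{-\alpha - 1} + c_d = 0$ yields exactly \eqref{eq:n^*}. Since $g''(n) = \alpha(\alpha+1) \lambda_d (c_r + c_d) \tau n^{-\alpha - 2} > 0$ for $n > 0$, this critical point is a global minimum and gives the unique optimal pool size $n^*$. Plugging $n^*$ back into the effective cost identity $p = \bar F^{-1}(\lambda_d / \lambda_r) - c_r \eta(n)$ recovers \eqref{eq:p^*}, which completes the characterization.

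There is no real obstacle: the key observation that the steady state constraint fixes the effective cost (and hence the matching rate and conditional value) is what collapses the problem into a convex one dimensional minimization. The only care needed is to check that $\bar F^{-1}(\lambda_d/\lambda_r)$ is well defined, which is implicit in assuming a steady state exists, and to confirm that the formula for $n^*$ is positive and finite under the standing assumptions $\lambda_d, \tau, c_d, c_r + c_d > 0$ and $\alpha \in (0,1)$.
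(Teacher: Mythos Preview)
Your argument is correct and follows essentially the same route as the paper: use the steady-state constraint to fix the effective cost (hence the matching rate and conditional expectation), reduce welfare maximization to a one-dimensional optimization in $n$, and solve the first-order condition. The only addition beyond the paper's proof is your explicit second-derivative check confirming the critical point is a global optimum.
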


\begin{proof}
At any steady state, $\lambda_d = \lambda_r \bar F(p + c_r \eta(n))$ and $p + c_r \eta(n) = \bar F^{-1} (\lambda_d / \lambda_r)$, i.e., \eqref{eq:p^*}. The social welfare can therefore be written as:
\[w = \lambda_d \left(\E[V][V \geq \bar F^{-1} (\lambda_d / \lambda_r)] - (c_r + c_d) \tau n^{-\alpha}\right) - c_d n.\]
The highest welfare is achieved when
\[0 = \frac{\dd w}{\dd n} = \alpha \lambda_d (c_r + c_d) \tau n^{-\alpha - 1} - c_d,\]
which implies \eqref{eq:n^*}.
\end{proof}

Note that the optimal number of available drivers $n^\ast$ given by \eqref{eq:n^*} is independent of the distribution of riders' values. Intuitively, $n^\ast$ achieves an optimal trade-off between $(c_r + c_d) \eta(n)$, the total cost incurred due to the en route time, and $c_d n$, the cost of the available drivers waiting to be dispatched. Moreover, reorganizing \eqref{eq:n^*}, we have
\begin{equation}
  \frac{n^*/\lambda_d}{\eta(n^*)}
  = \frac{n^*/ \lambda_d}{\tau (n^*)^{-\alpha}}
  = \frac{\alpha  (c_r + c_d)}{c_d}. \label{equ:opt_eta_etr_ratio}
\end{equation}
In steady states, $\lambda_d$ represents both the arrival rate of drivers and the rate at which available drivers are dispatched. By Little's Law, $n^*/\lambda_d$ is the average time an available driver needs to wait to for a dispatch. \eqref{equ:opt_eta_etr_ratio} therefore implies that under the SOSS, the ratio between drivers' waiting time for a dispatch and riders' waiting time to be picked-up is fixed, independent to riders' value distribution or the arrival rates of riders and drivers.

\begin{example} \label{exmp:cycles}
Consider a %
region where drivers arrive at a rate of $\lambda_d = 1$, riders arrive at arrive at a rate of $\lambda_r = 5$, and the en route time is parameterized by $\alpha = 1 / 2$ and $\tau = 3 \sqrt{3} / 4$.
We study a few settings, where drivers' opportunity cost is $c_d = 0.03$ per unit of time, and riders' %
cost of time is either $c_r = 0$ or $c_r = 0.5$. 
Moreover, riders' values are either uniformly distributed ($V \sim \Unif[0, 1]$) or exponentially distributed ($V \sim \Exp(2.5)$).

\Cref{fig:soc_opt} shows contour lines of $\dot n(t)$, the set of steady states, and the socially optimal steady state (SOSS). 
On each contour line, we have the same value of $\dot n(t)$, and thus the same effective cost for riders $p + c_r \eta(n)$. In other words, they are the \emph{riders'} indifference curves.

\begin{figure}[t!]
  \centering
  \subcaptionbox{$c_r = 0$, $V \sim \Unif[0, 1]$. \label{fig:soc_opt_a}} {\includegraphics[width = 0.325 \textwidth] {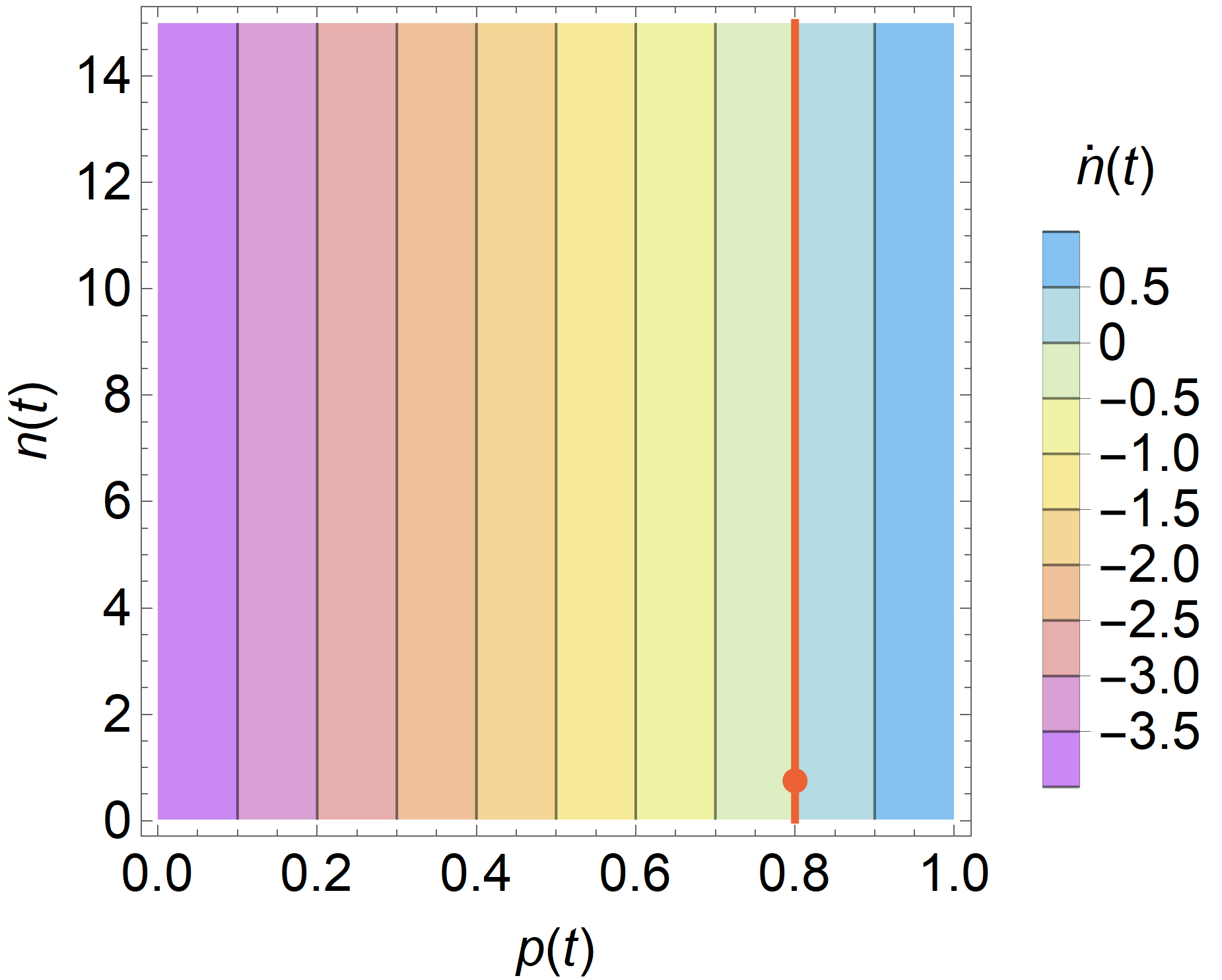}} %
  \hfill
  \subcaptionbox{$c_r = 0$, $V \sim \Exp(2.5)$. \label{fig:soc_opt_b}} {\includegraphics[width = 0.325 \textwidth] {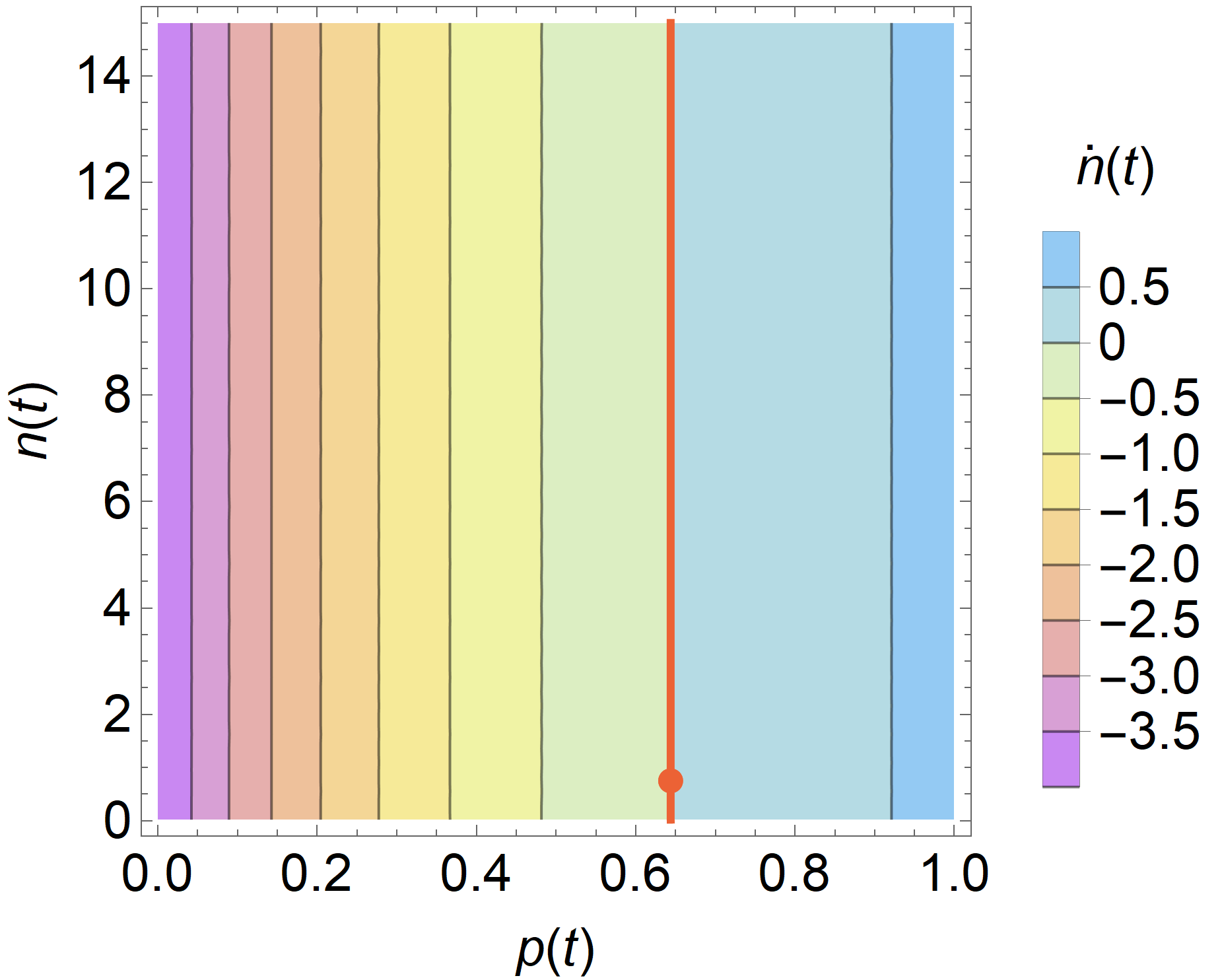}}
  \hfill
  \subcaptionbox{$c_r = 0.5$, $V \sim \Exp(2.5)$. \label{fig:soc_opt_c}} {\includegraphics[width = 0.325 \textwidth] {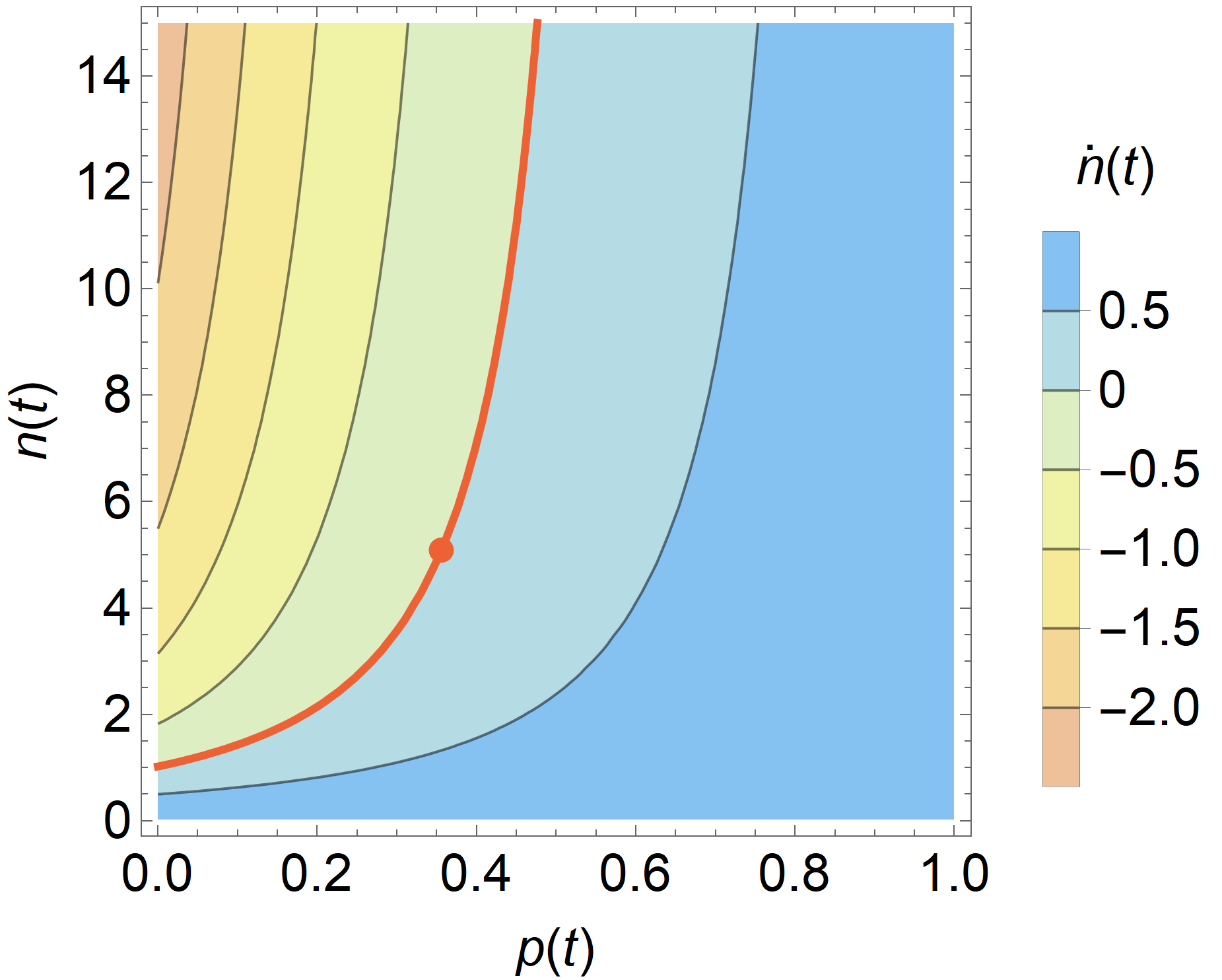}}
  \caption{Contour plot of $\dot n(t)$ at different states $(p, n)$ for \Cref{exmp:cycles}. The bold red line is the set of steady states (i.e., $\dot n(t) = 0$). The dot on the line indicates the socially optimal steady state $(p^*, n^*)$.}
  \label{fig:soc_opt}
\end{figure}

The line of steady states divides the state space into two parts: on the left hand side, the number of drivers decreases over time, whereas the number of drivers increases over time on the right hand side. 
The solid dot indicates the SOSS $(p^\ast, n^\ast)$ for each scenario. Comparing Figures~\ref{fig:soc_opt_b} and~\ref{fig:soc_opt_c}, we see that when waiting is more costly for the riders, it is optimal to maintain a higher number of available drivers %
in order to reduce the en route time. 
\end{example}

\subsection{Platform Pricing} \label{sec:pricing}

Given Proposition \ref{prop:SOSS}, we can concretely define the goal of the platform when designing the price $p(t)$. %
Specifically, we model a platform that seeks to guide the system to the socially optimal steady state %
quickly, subject to the constraint that the price cannot be adapted ``too fast.''
In practice, platforms limit the rate of price changes because 
(i) erratic prices lead to poor experiences for riders~\citep{dholakia2015everyone}; %
(ii) it takes time for drivers to re-position from one region to another, thus prices that change too fast are less effective for incentivizing driver movement; and
(iii) the underlying market condition typically changes slower than the state, i.e., the number of open cars.
In our model, we account for the need for ``smooth'' prices by limiting the rate of price increase to be at most $\ell_+ > 0$ and the rate of decrease to be at most $\ell_- > 0$, i.e., $-\ell_- \le \dot p(t) \le \ell_+$.

It remains to specify how the platform sets prices to guide the system to the SOSS quickly. Intuitively, consider an unexpected shock on the supply or the demand side that %
results in too many available drivers in the region than that under the SOSS.
The platform should first decrease the price so as to encourage higher rider demand than the arrival rate of drivers. Then, the platform should gradually increase the price back to the SOSS price. 
Similarly, if there are too few drivers available, the platform should first increase the price to reduce the rider demand, building up the pool of available drivers, %
and then decrease the price back to the SOSS price. %
Under such pricing policies, Figure~\ref{fig:pricing_policy} illustrates the how the state $(p(t), n(t))$ evolves over time for settings studied in \Cref{exmp:cycles} when the size of the price adjustments are limited by $\ell_- = \ell_+ = 0.1$, and Figure~\ref{fig:path_fastest} illustrates a  trajectory from an initial state %
to the SOSS.

To be precise, the state space (except the SOSS point) can be divided into two regions corresponding to whether the platform should increase or decrease the prices. We define the boundaries $C_+, C_- \subseteq \R^2$ as parametric curves. Intuitively, starting from any state $(p, n) \in C_+$, the trajectory eventually converges to the SOSS $(p^*, n^*)$ if the platform continues to increase the price at the highest rate $\ell_+$. Similarly, $C_-$ is the set of states leading to $(p^*, n^*)$ if the platform decreases the price at a rate of $\ell_-$. Formally, the boundaries $C_+, C_- \subseteq \R^2$ are defined as follows.
\begin{itemize}[leftmargin=*]
  \item Let $(p_+(t), n_+(t))$ be the solution to the following system of differential equations: 
  \[\dot p(t) = \ell_+, \qquad \dot n(t) = \lambda_d - \lambda_r \bar F \big(p(t) + c_r \eta(n(t))\big), \qquad p(0) = p^*, \qquad n(0) = n^*.\]
  Define parametric curve $C_+ \triangleq \Set{(p_+(t),~ n_+(t)) \in \R^2 \Given t \in (-p^* / \ell_+, 0)}$.

  \item Let $(p_-(t), n_-(t))$ be the solution to the following system of differential equations:
  \[\dot p(t) = -\ell_-, \qquad \dot n(t) = \lambda_d - \lambda_r \bar F \big(p(t) + c_r \eta(n(t))\big), \qquad p(0) = p^*, \qquad n(0) = n^*.\]
  Define parametric curve $C_- \triangleq \Set{(p_-(t),~ n_-(t)) \in \R^2 \Given t \in (-\infty, 0)}$.
\end{itemize}

Given the definitions above, $C_+ \cup C_- \cup \{(p^*, n^*)\}$ divide the state space into two regions. In the upper (or lower) region, %
the platform first decreases (or increases) the price, and then starts to increase (or decrease) the price again when the system state reaches $C_+$ (or $C_-$).
 
Formally, the platform's \newterm{pricing policy} $\pi$ is defined as follows.

\begin{definition} \label{defn:pricing_policy}
The platform's pricing policy $\pi: \R^2 \to \R$ is mapping from a state $(p, n)$ to the derivative of the price $\dot p$ given by:
\begin{equation} \label{eq:pricing_policy}
  \dot p = \pi(p, n) = \begin{cases}
    0       & \text{if } (p, n) = (p^*, n^*). \\
    \ell_+  & \text{if } \exists n' \ge n, (p, n') \in C_+ \text{ or } \exists n' > n, (p, n') \in C_-. \\
    -\ell_- & \text{if } \exists n' < n, (p, n') \in C_+ \text{ or } \exists n' \le n, (p, n') \in C_-.
  \end{cases}
\end{equation}
\end{definition}

We illustrate the structure of the platform pricing policy by returning to our running example.

\begin{figure}[t!]
  \centering
  \subcaptionbox{$c_r = 0$, $V \sim \Unif[0, 1]$.} {\includegraphics[width = 0.325 \textwidth] {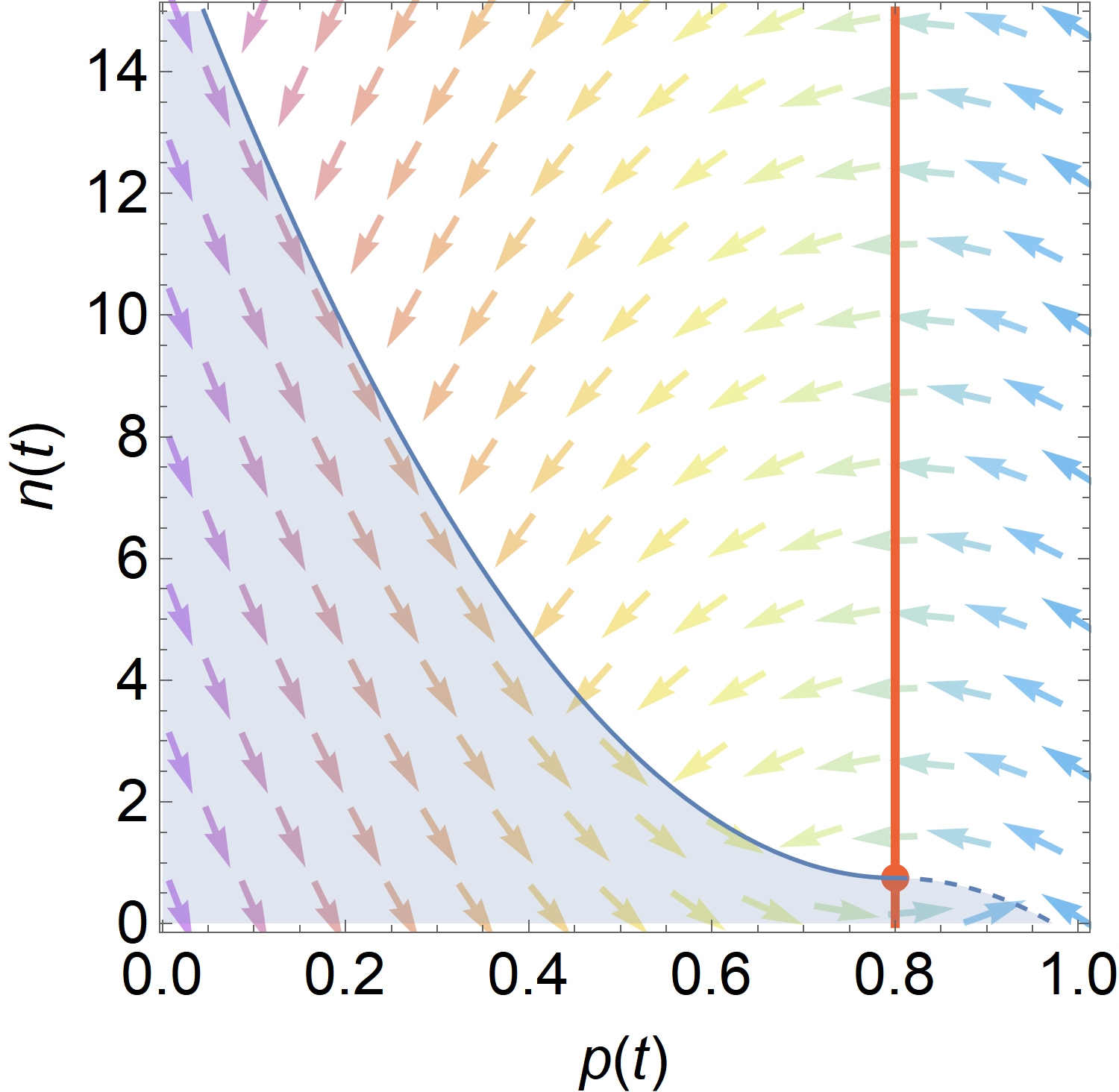}}
  \hfill
  \subcaptionbox{$c_r = 0$, $V \sim \Exp(2.5)$.} {\includegraphics[width = 0.325 \textwidth] {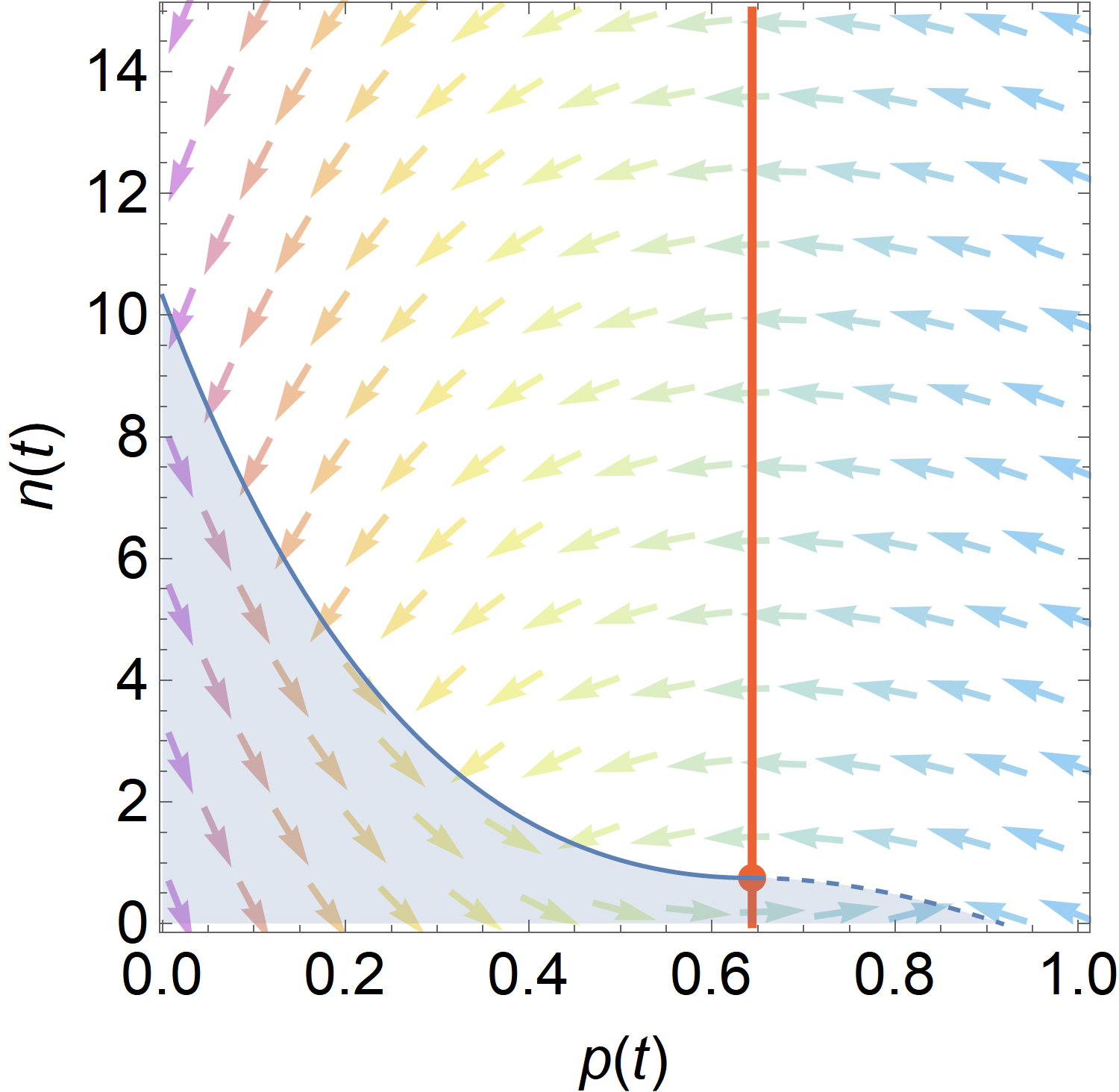}}
  \hfill
  \subcaptionbox{$c_r = 0.5$, $V \sim \Exp(2.5)$.} {\includegraphics[width = 0.325 \textwidth] {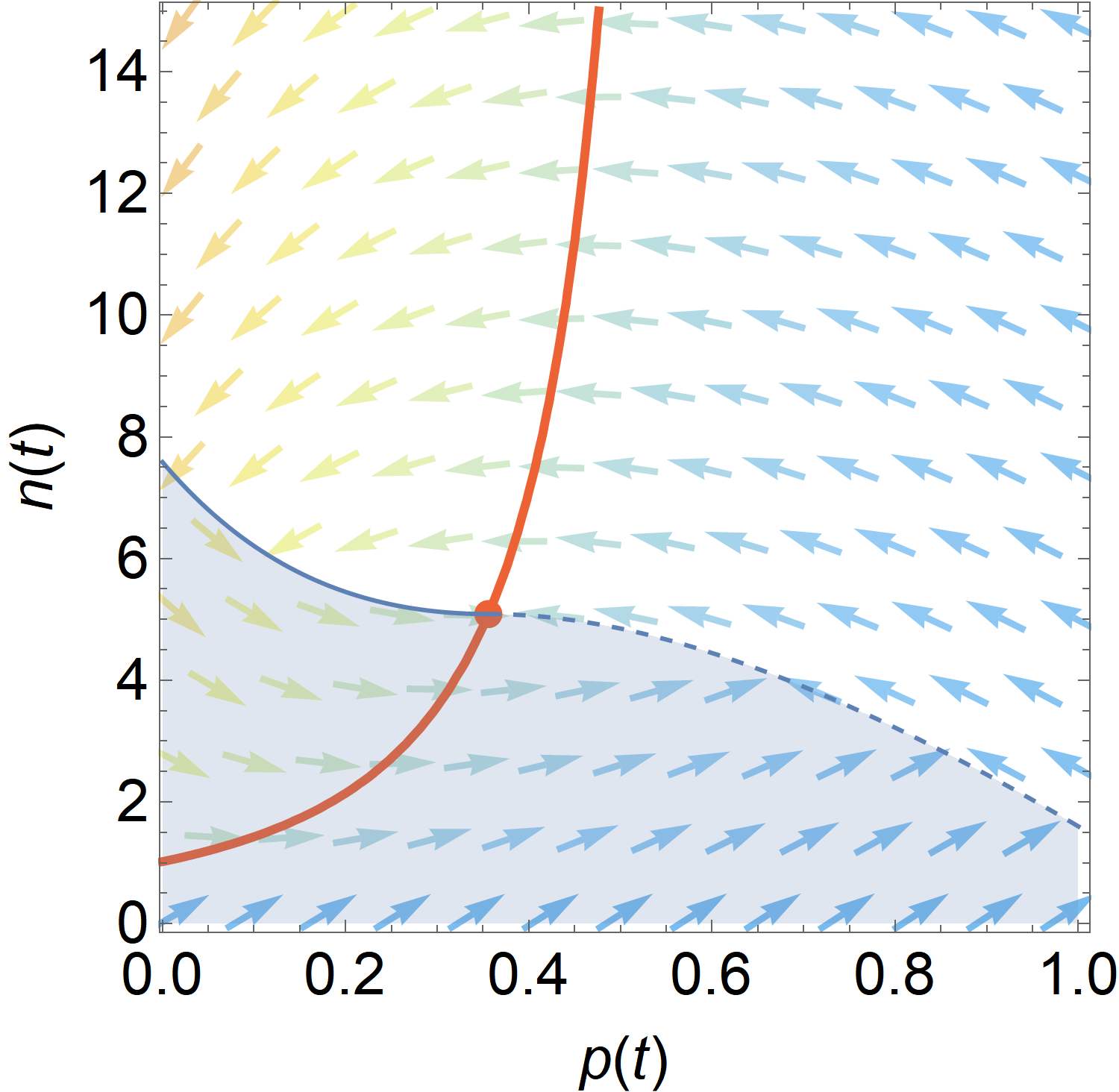}}
  \caption{Illustration of platform pricing policy and state evolution, with $\ell_- = \ell_+ = 0.1$. The vector field indicates the directions of state evolution, and the arrow colors represent how fast the number of drivers changes. In the upper region and on the dashed line $C_-$, the platform reduces the price. In the lower region and on the solid line $C_+$, the platform raises the price.
  }
  \label{fig:pricing_policy}
\end{figure}

\begin{figure}[t!]
  \vspace{.1in}
  \centering
  \subcaptionbox{Path under the platform pricing policy. Time $= 1 + 6 \sqrt{3} \approx 11.4$.\label{fig:path_fastest}} {\includegraphics[height = 0.293 \textwidth] {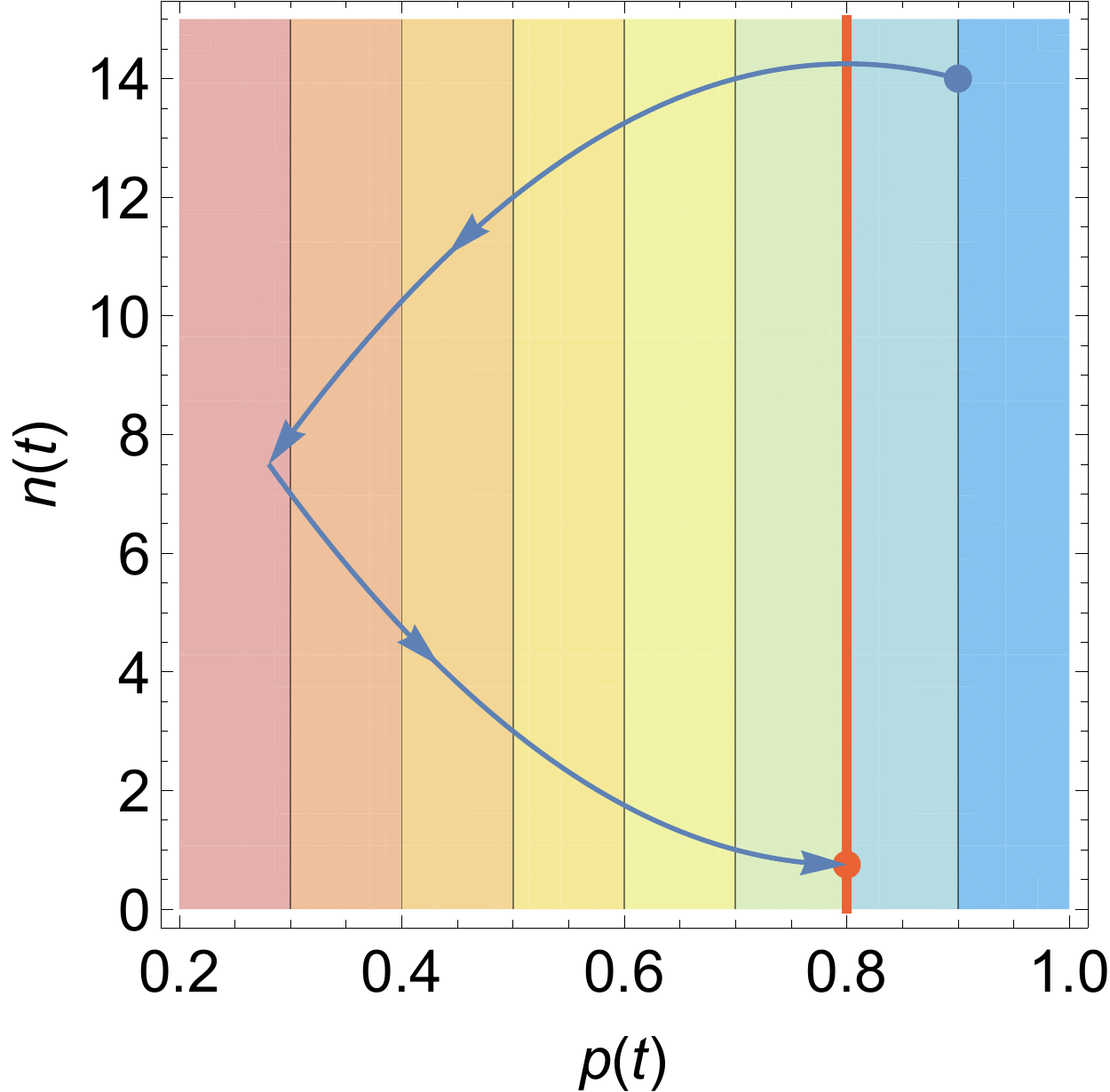}}
  \hfill
  \subcaptionbox{Path that undershoots. \\Time $= 13$. \label{fig:path_undershoots}} {\includegraphics[height = 0.293 \textwidth] {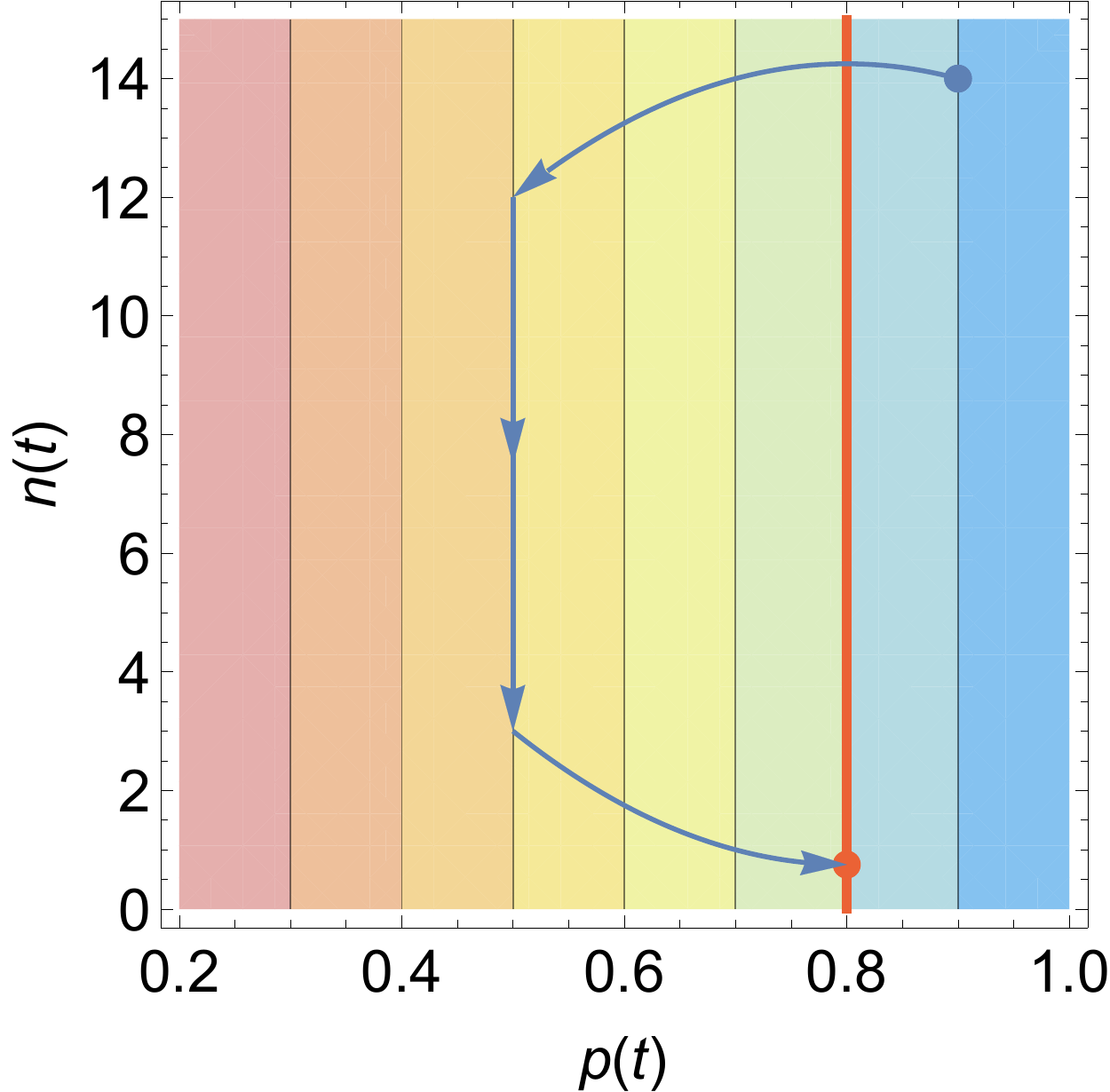}}
  \hfill
  \subcaptionbox{Path that overshoots. \\Time $= 3 + 4 \sqrt{7} \approx 13.6$. \label{fig:path_overshoots}}
  {\includegraphics[height = 0.293 \textwidth] {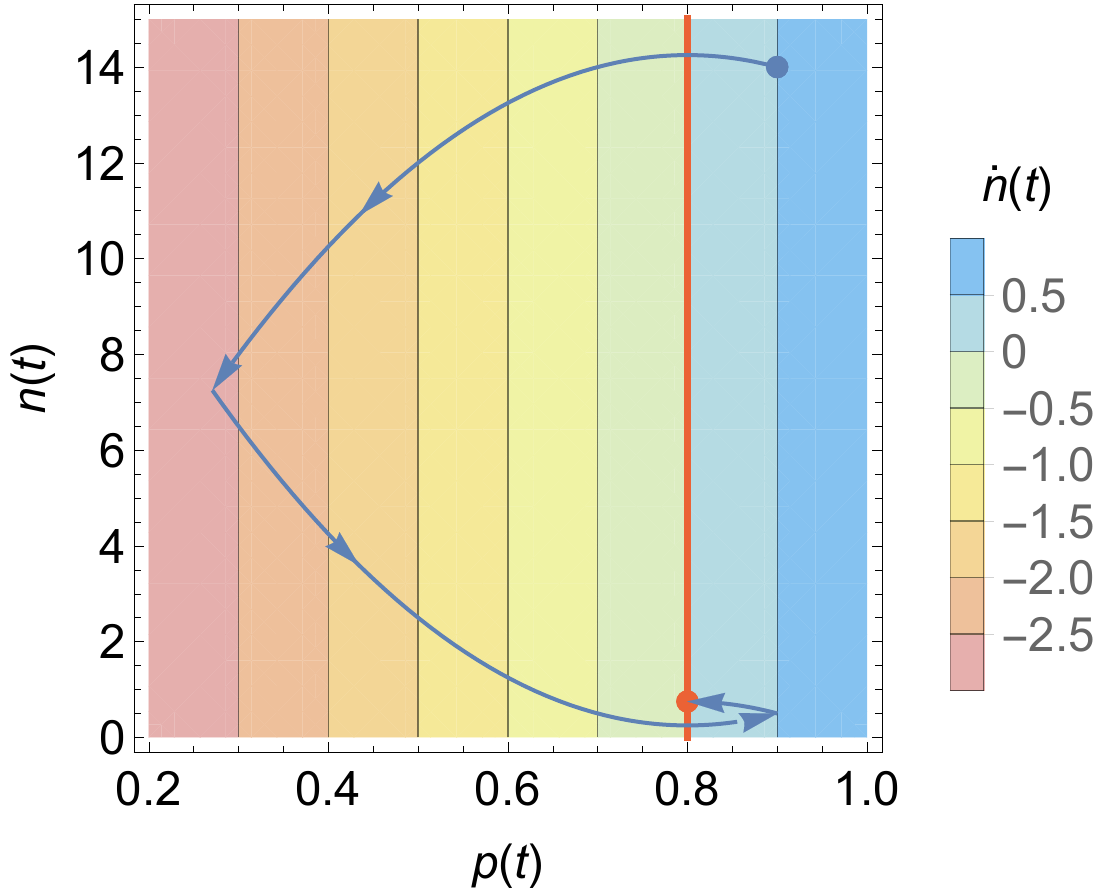}}
  \caption{Three paths from state $(0.9, 14)$ to the SOSS $(0.8, 0.75)$ when $c_r = 0$,  $V \sim \Unif[0, 1]$, and $\ell_- = \ell_+ = 0.1$. %
  }
  \label{fig:under/over-shoot}
\end{figure}

\begin{examplecont}{exmp:cycles}
Suppose the rates for price changes are limited by $\ell_+ = \ell_- = 0.1$. \Cref{fig:pricing_policy} shows the pricing policy of the platform for the three scenarios analyzed in \Cref{fig:soc_opt}, including a vector field indicating how the state evolves over time.

For the scenario with rider cost $c_r = 0$ and rider value $V \sim \Unif[0, 1]$, Figure~\ref{fig:under/over-shoot} provide three example paths to move from an initial state $(p,n) = (0.9, 14)$ to the SOSS $(p^\ast, n^\ast) = (0.8, 0.75)$.
Figure~\ref{fig:path_fastest} illustrates the trajectory under the platform's pricing policy given in \eqref{eq:pricing_policy}. 
In Figure~\ref{fig:path_undershoots}, prices was held at 0.5 without decreasing further, leading to a slower return to the SOSS. On the other hand, when prices continued to decrease after reaching $C_+$ in Figure~\ref{fig:path_overshoots}, there are less driver than $n^\ast$ when the price rose to $p^\ast$, and this again leads to a slower return to the SOSS.
\end{examplecont}

\section{Stable Price Cycles} \label{sec:defn_cycles}

The focus of this paper is on understanding and controlling price cycles in ridesharing platforms. In this section, we demonstrate how price cycles emerge when drivers collectively adopt \emph{online/offline strategies}, going offline at the same time, and coming back online to take rides again only after prices have risen due to the seemingly low supply level.

Before formally defining an online/offline strategy, we first illustrate why it can be beneficial for a driver to go offline and temporarily and withhold supply. 
Consider, for example, the path from the initial state $(0.9, 14)$ back to the SOSS as shown in Figure~\ref{fig:path_fastest}. Figure~\ref{fig:offline_motivation} illustrates the net payoff a driver gets from a accepting trip $p(t) - c_d \eta(n(t))$, assuming that all drivers are non-strategic, and that the state is at $(0.9, 14)$ at time $t = 0$. Intuitively, drivers get a low net payoff from accepting trips during times with low prices. 

\begin{figure}[t!]
  \centering
  \includegraphics[width = 0.6 \textwidth] {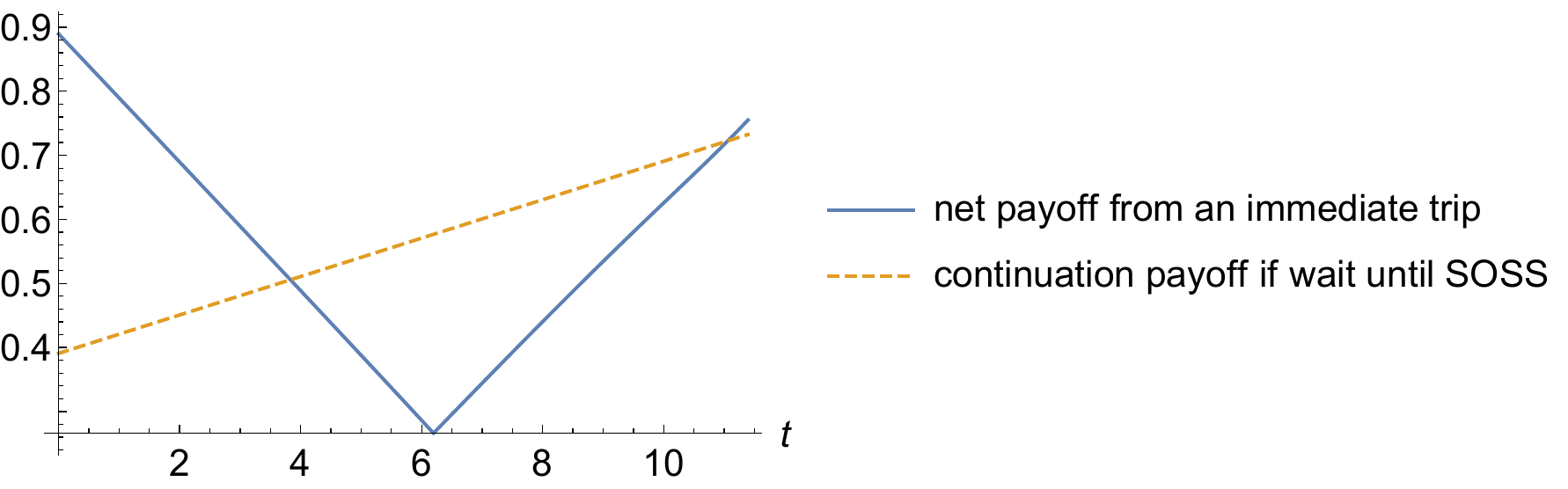}
  \caption{Along the path in \Cref{fig:path_fastest}, the net payoff from an immediate trip $p(t) - c_d \eta(n(t))$, and the continuation payoff of remaining offline until the system state reaches the SOSS $u^* - c_d (%
  11.4 - t)$. %
  }
  \label{fig:offline_motivation}
\end{figure}

Instead of staying online and accepting the first dispatch from the platform, consider a driver who stays offline until the state reaches the SOSS. 
At the SOSS, each driver waits an average of $n^*/\lambda_d$ units of time for a trip dispatch. The \newterm{continuation payoff} of a driver at the SOSS, i.e. the net payoff from the trip minus the cost the driver incurred waiting for the dispatch, is therefore: 
\begin{equation}
  u^* \triangleq p^* - c_d \eta(n^*) - c_d \frac{n^*}{\lambda_d}.
\end{equation}
Since it takes $1 + 6 \sqrt{3} \approx 11.4$ units of time for system to return to the SOSS, at time $t$, the continuation payoff of a driver who stays offline until the SOSS is approximately $u^\ast - c_d(11.4 - t)$. This is also illustrated in Figure~\ref{fig:offline_motivation}. 
As we can see, if the rest of drivers stay online and accept all dispatches from the platform, a driver can benefit from declining trip dispatches during times when prices are low (from $t \approx 4$ to $t \approx 11$, in this example). This illustrates that pricing trips in order to quickly return to the SOSS may incentivize drivers to strategically go offline and wait for a better price.%
\footnote{Note that ``remaining offline until the SOSS'' is not the optimal strategy a driver can adopt in this scenario. See Section~\ref{sec:cycle_existence} for a formal discussion on drivers' best response.}

\subsection{Online/Offline Strategies}

In Section~\ref{sec:preliminaries}, we define $n(t)$ as the \newterm{number of {online} drivers} on the platform, i.e., those who are in the region and available for dispatch at time $t$. Here, we additionally denote $n_0(t) \ge 0$ as the \newterm{number of {offline} drivers} in the region at time $t$, i.e., those waiting to return at a later moment. Let $N(t) \triangleq n(t) + n_0(t)$ be the \newterm{total number of drivers} at time $t$.
The platform sets trip prices based on $n(t)$ but not $n_0(t)$. A \newterm{driver strategy} $\sigma$ determines, for any time $t$, the probability $\sigma(t) \in [0, 1]$ with which the driver is online and accepts dispatches from the platform; thus drivers is offline at time $t$ with probability $1 - \sigma(t)$.

In practice, drivers coordinate on a kind of simple ``threshold strategies''~\citep{WJLA2019ReganNationalCollusion,uberPeopleLA}, going offline when prices drop below some $\ubar p > 0$ and returning online after the price has risen above some $\bar p > \ubar p$. Such strategies depend on the price level $p$ and whether the price is increasing or decreasing, and are affected by the number of online drivers $n$ indirectly through the platform's pricing policy $\dot p = \pi(p, n)$.

\begin{definition}[Online/offline threshold strategies] \label{defn:online_offline_strategy}
Under an online/offline threshold strategy with upper and lower price thresholds $(\bar p, ~\ubar p)$, a driver switches online after the price rises above $\bar p$, and goes offline when price drops below $\ubar p$. Formally,
\begin{equation} \label{eq:driver_strategy}
  \sigma_{\bar p, \ubar p}(p, n) = \begin{cases}
  1 & \text{if } (p \ge \bar p) \text{ or } (p > \ubar p \text{ and } \pi(p, n) \le 0), \\
  0 & \text{if } (p \le \ubar p) \text{ or } (p < \bar p \text{ and } \pi(p, n) > 0), 
  \end{cases}
\end{equation}
where $\pi$ is the platform's pricing policy as in \eqref{eq:pricing_policy}.
\end{definition}

When all drivers coordinate and adopt the same online/offline threshold strategy, at any point in time $t$, either all drivers are online or all are offline.
When adopted by drivers, these kinds of threshold strategies have the potential to lead to cyclic fluctuations of prices and the number of available drivers on the platform.
Consider, for example, the scenario analyzed in Figure~\ref{fig:path_fastest}. When %
drivers employ an online/offline threshold strategy with $(\ubar p, \bar p) = (0.3, 0.9)$, %
the state evolves as follows (see Figure~\ref{fig:stable_cycle} for an illustration).
\begin{itemize}[leftmargin=*]
    \item Starting from the initial state $(0.9, 14)$, the platform decreases the price in order to expedite the dispatching of the excessive driver supply. The number of drivers gradually declines after a small initial increase. Once the price drops to $\ubar p = 0.3$, however, all drivers switch offline. With zero available drivers online, the platform starts to increase the price at a rate of $\ell_+$. %
    \item At the time the price reaches $\bar p = 0.9$, all offline drivers re-emerge and start to accept dispatches again. With the large number of drivers accumulated during the period of price increase, the state $(p, n)$ will likely reside above $C_-$. As a result, the platform will again start to decrease the trip prices at a rate of $\ell_-$, and potentially repeating the same dynamic if price again drops to $\ubar p$.
\end{itemize}

We describe such cyclic fluctuations of prices using times $t_0$, $t_1$ and $t_2$, where $t_0$ is a time when the price is the lowest (i.e. $p(t_0) = \ubar p$), $t_1$ is the first time after $t_0$ such that the price reaches the peak %
(i.e. $p(t_1) = \bar p$), and $t_2$ is the first time after $t_1$ s.t. $p(t_2) = \ubar p$, which is exactly one period after $t_0$. 
Under the platform's pricing policy \eqref{eq:pricing_policy}, we have $t_1 - t_0 = (\bar p - \ubar p)/\ell_+$ and $t_2 - t_1 = (\bar p - \ubar p)/\ell_-$, thus the \newterm{period length} is $T \triangleq t_2 - t_0 = (\bar p - \ubar p)(1/\ell_+ + 1/\ell_-)$.

Further, we use $u(t, \sigma, \sigma')$ to denote the \newterm{continuation payoff} of a driver from time $t$ onward, when the driver adopts strategy $\sigma$ while others employ strategy $\sigma'$.
It is defined as the price of the trip the driver accepts, minus the cost of time the driver incurs before picking up the rider: %
\begin{equation}
  u(t, \sigma, \sigma') \triangleq \E[p(t + T_\text{wait}) - c_d (T_\text{wait} + \eta(n(t + T_\text{wait})))]. \label{equ:expected_utility}
\end{equation}
Here, $T_\text{wait}$ is a random variable representing the amount of time a driver waits before accepting a trip dispatch, when the driver adopts strategy $\sigma$ and when every other driver employs strategy $\sigma'$. 
$p(t + T_\text{wait})$ is the trip price at the time of dispatch, and $\eta(n(t + T_\text{wait}))$ is the average en route time the driver spends picking up the rider. 

With this notation in hand, we now formally define a stable price cycle.

\begin{definition}[Stable price cycle] \label{def:stable_cycles} 
Suppose the rate of price adjustments is limited by $(\ell_+, \ell_-)$. 
An online/offline threshold strategy $\sigma_{\bar p, \ubar p}$ forms a \emph{stable price cycle} with a maximum of $\hat n > 0$ drivers if the associated trip price $p(t)$ and number of drivers online $n(t)$ satisfy $\max_{t' \in [t, t + T]} n(t') = \hat{n}$, and:
  \begin{enumerate}[label = (C\arabic*)]
    \item (market clearing) \label{cond:repeat} the cycle repeats itself, i.e., the net accumulation of drivers over a period is zero. Formally, for any time $t$,
    \begin{equation} \label{eq:repeat}
      \lambda_d T = \int_{t}^{t + T} \lambda_r \bar F(p(t') + c_r \eta(n(t'))) \dd t';
    \end{equation}
    \item (driver best response) \label{cond:driver} the online-offline strategy $\sigma_{\bar p, \ubar p}$ forms a Nash equilibrium among drivers, meaning that it is not useful for a %
    driver to unilaterally deviate to any feasible (and potentially non-threshold) strategy $\sigma$, i.e., 
    \begin{equation}
      u(t, \sigma_{\bar p, \ubar p}, \sigma_{\bar p, \ubar p}) \ge u(t, \sigma, \sigma_{\bar p, \ubar p}), \ \forall \sigma, \ \forall t.
    \end{equation}
  \end{enumerate}
\end{definition}

For simplicity of notation, we denote a stable price cycle succinctly as $(\ell_+, \ell_-, \ubar p, \bar p, \hat n)$. As an illustration, we return to the setting analyzed in \Cref{exmp:cycles}.

\begin{examplecont}{exmp:cycles}
Consider the setting where the arrival rate of drivers and riders are $\lambda_d = 1$ and $\lambda_r = 5$, respectively. The en-route time is parameterized by $\alpha = 1 / 2$ and $\tau = 3 \sqrt{3} / 4$. Riders' value is uniformly distributed $V \sim \Unif[0, 1]$, and riders' and drivers' costs of time are $c_r = 0$ and $c_d = 0.03$. 
\Cref{fig:stable_cycle_1} illustrates a stable cycle under a pricing policy with $\ell_+ = \ell_- = 0.1$, when all drivers adopt an online/offline threshold strategy $\sigma_{\bar p, \ubar p}$ with $\ubar p = 0.3$ and $\bar p = 0.9$. \Cref{fig:stable_cycle_1_state} shows how the state evolves over time; \Cref{fig:stable_cycle_1_number_of_drivers} illustrates the number of drivers who are online and offline at different points of time; \Cref{fig:stable_cycle_1_price_and_utility} presents the trip price, drivers' net payoff, and drivers' continuation payoff under strategy $\sigma_{\bar p, \ubar p}$. As a comparison, drivers' continuation payoff under the socially optimal steady state is $u^* = p^\ast - c_d(\eta(n^\ast) + n^\ast / \lambda_d)  = 0.7325$.

\begin{figure}[t!]
  \centering
  \begin{minipage}{0.41 \hsize} 
    \subcaptionbox{Driver number versus price. \label{fig:stable_cycle_1_state} }{\includegraphics[height = \hsize]{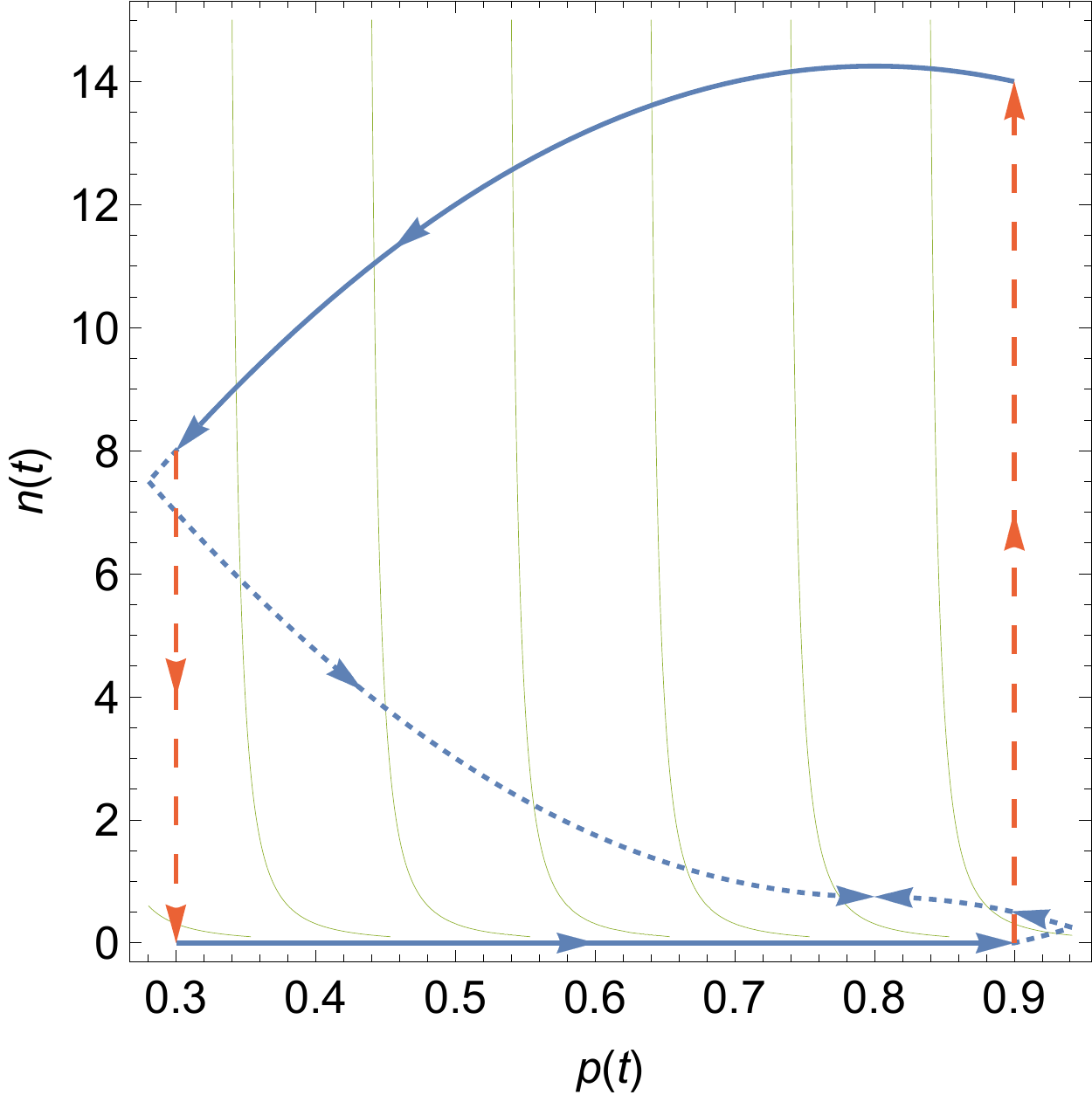}}
  \end{minipage}
  \hfill
  \begin{minipage}{0.55 \hsize}
    \:\subcaptionbox{Driver number versus time. 
    \label{fig:stable_cycle_1_number_of_drivers}}{\includegraphics[height = .34 \hsize]{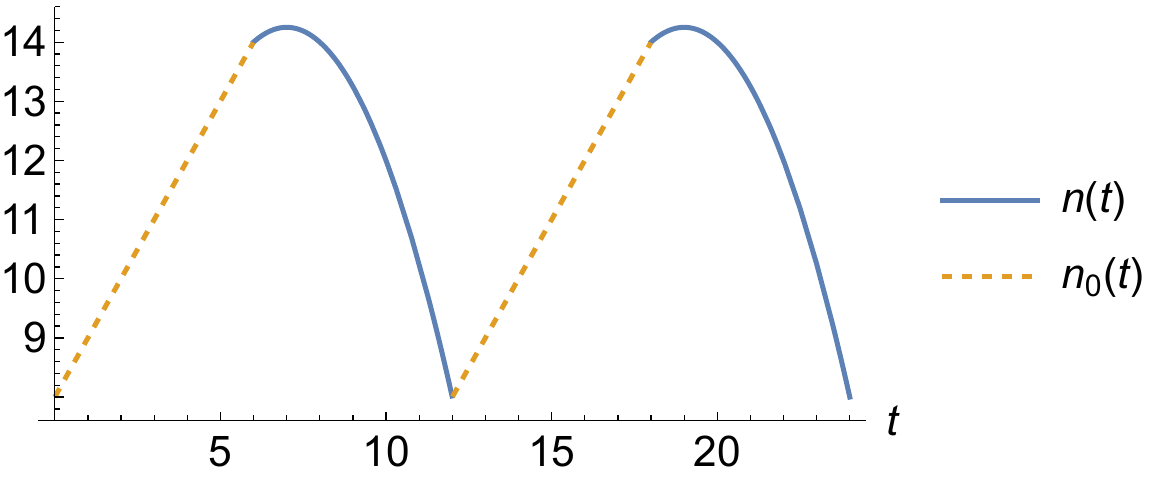}}
    \\
    \subcaptionbox{Price and utility versus time. \label{fig:stable_cycle_1_price_and_utility}}{\includegraphics[height = .33 \hsize]{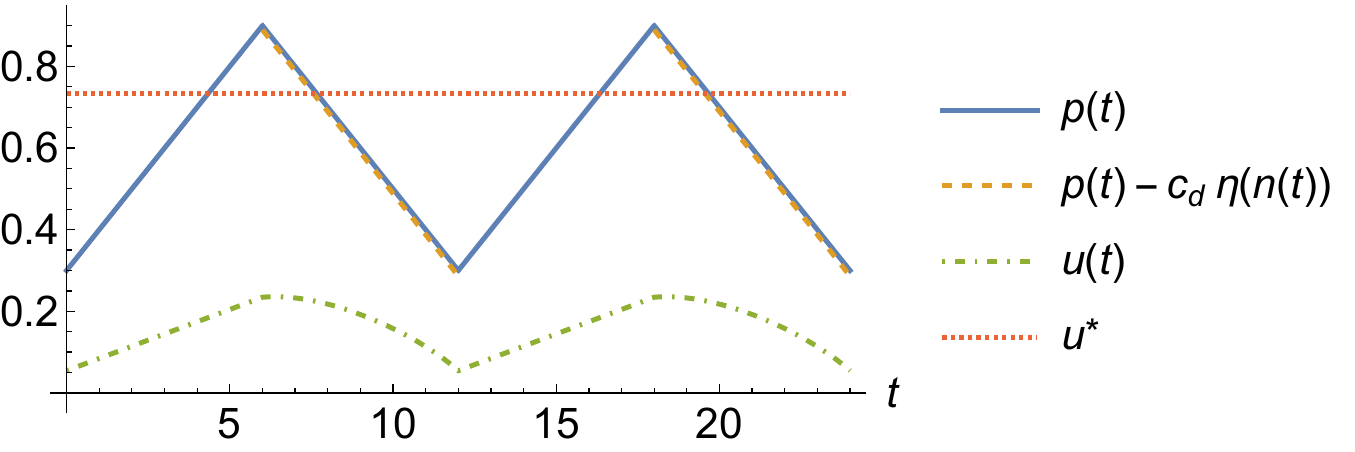}}
  \end{minipage}
  \caption{The ``symmetric'' cycle (with $\ell_+ = \ell_- = 0.1$) in \Cref{exmp:cycles}.
  \Cref{fig:stable_cycle_1_state} shows the trajectories %
  induced by the online/offline threshold strategy %
  (solid blue lines) and the trajectories planned by the platform's pricing policy (dotted blue lines). 
  The vertical red dashed lines indicates drivers' collectively going online and offline. The green lines are the indifference curves of the \emph{drivers}. 
  \Cref{fig:stable_cycle_1_number_of_drivers} shows the number of offline drivers $n_0(t)$ and online drivers $n(t)$ over time $t$.
  \Cref{fig:stable_cycle_1_price_and_utility} shows the trip price $p(t)$, a driver's net payoff $p(t) - c_d \eta(n(t))$, a driver's continuation payoff during the cycle $u(t)$, a driver's continuation payoff at SOSS $u^*$.
  Drivers go offline at time $t_0 = 0$, then switch online at time $t_1 = 6$, and go offline again at time $t_2 = 12$.
  \label{fig:stable_cycle_1}
  }
  \label{fig:stable_cycle}
\end{figure}

Under $\sigma_{\bar p, \ubar p}$, when the trip price increases from $\bar p = 0.3$ to $\ubar p = 0.9$, all drivers (including those who arrive meanwhile) remain offline. The drivers come online once the price reaches $0.9$, and the platform then starts to lower the price after observing the driver supply. There are more drivers turning online at $\bar p$ than the number of drivers going offline at $\ubar p$, because drivers continue to arrive as the price increases.

In \Cref{fig:stable_cycle_2}, we illustrate a stable cycle where the rates of price increase and decrease are asymmetric: $\ell_+ = 0.2$, and $\ell_- = 0.1$. The setting is the same as above, except that $c_d = 0.08$ and $\tau = 2$. 

\begin{figure}[t!]
  \centering
  \begin{minipage}{0.41 \hsize}
    \;\subcaptionbox{Driver number versus price. \label{fig:stable_cycle_2_state} }{\includegraphics[height = \hsize]{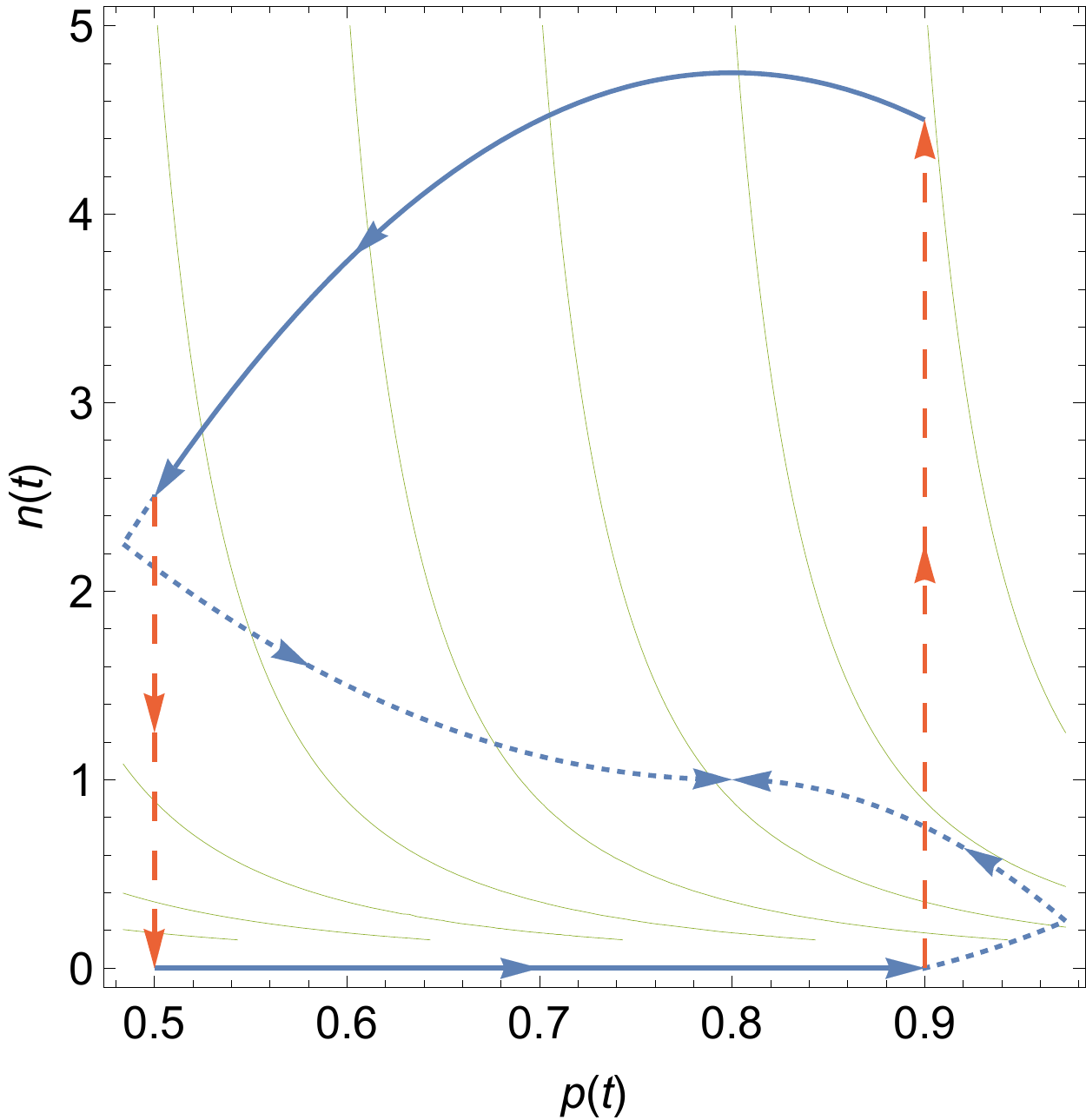}}
  \end{minipage}
  \hfill
  \begin{minipage}{0.55 \hsize}
    \subcaptionbox{Driver number versus time. %
    \label{fig:stable_cycle_2_number_of_drivers}}{\includegraphics[height = .34 \hsize]{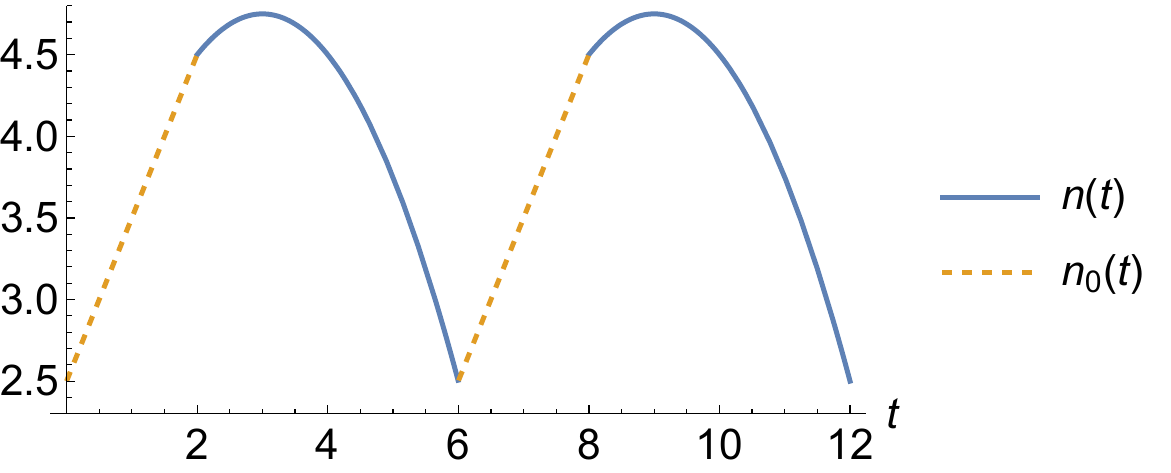}} \\
    \subcaptionbox{Price and utility versus time. \label{fig:stable_cycle_2_price_and_utility}}{\includegraphics[height = .33 \hsize]{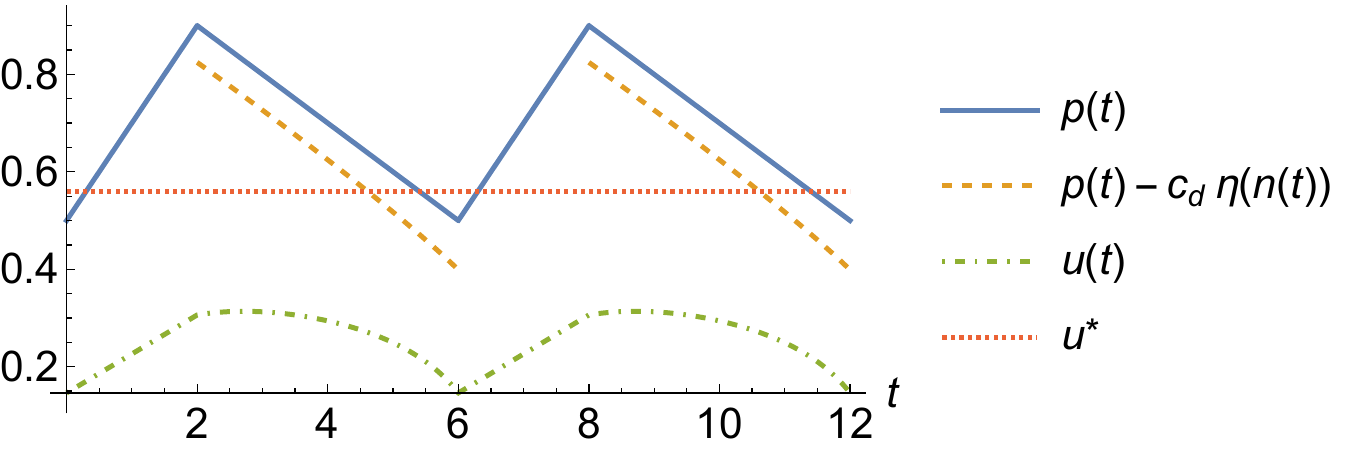}}
  \end{minipage}
  \caption{The ``asymmetric'' cycle in \Cref{exmp:cycles}, with $(\ell_+, \ell_-, \ubaralt{p}, \bar{p}, \hat{n}) = (0.2, 0.1, 0.5, 0.9, 4.75)$.  \label{fig:stable_cycle_2}}
\end{figure}

We formally establish the stability of these price cycles in the next section. Intuitively, drivers do not have an incentive to go online before others do as the price increases, since with a very small $n(t)$, drivers' en route time to pick up a rider will be very long. 
Further, if $c_d$ is appropriately high, going offline before other drivers do (as the price decreases) to wait for the next peak price is not better than remaining online and accepting dispatches from the platform.
\end{examplecont}

\section{Existence of Stable Price Cycles} \label{sec:cycle_existence}

The previous section illustrates %
how price cycles emerge when drivers collectively use the online/offline threshold strategies, and formally defines the stable price cycles induced by such strategies.
In this section, we study when they exist and characterize them. %
Specifically, we provide sufficient conditions
that can be easily satisfied and verified.
Further, we show that, while the cycles may form an equilibrium among drivers, they are not necessarily beneficial for drivers. In particular, we prove that stable cycles cannot generate a higher average utility for drivers than the socially optimal steady state, for markets that are sufficiently dense.

In order to prove the existence of stable price cycles, we return to \Cref{def:stable_cycles}, which has two conditions. Condition \ref{cond:repeat} on market clearance is easy to verify, while \ref{cond:driver} on drivers' best response is more difficult to approach. 
Thus, the first step in our analysis is to develop a sufficient condition for %
\ref{cond:driver} that is both easy to satisfy and easy to verify.

We start by introducing some %
notation. When supply and demand are uniformly distributed in space, dispatching the closest driver for each trip request corresponds to dispatching available drivers uniformly at random. %
The rate at which a particular driver is dispatched at time $t$ is therefore
\begin{equation} \label{eq:rho(t)}
    \rho(t) \triangleq \frac{\lambda_r \bar F(p(t) + c_r \eta(n(t)))}{n(t)}.
\end{equation}
Moreover, denote
\begin{equation} \label{eq:h(t)}
  h(t) \triangleq p(t) - c_d \eta(n(t)) - \frac{c_d}{\rho(t)}.
\end{equation}
$h(t)$ can be interpreted as the continuation payoff of a driver in a stationary environment where the price and number of drivers are fixed at $p(t)$ and $n(t)$, respectively.
Finally, let
\begin{equation}
    u(t) \triangleq \sup_\sigma u(t, \sigma, \sigma_{\bar p, \ubar p}) \label{eq:opt_cont_payoff}
\end{equation}
denote a driver's optimal continuation payoff at time $t$, when all other drivers use strategy $\sigma_{\bar p, \ubar p}$. The following lemma provides a characterization of $u(t)$.

\begin{lemma} \label{thm:opt_cont_payoff}
When other drivers adopt an online/offline strategy $\sigma_{\bar p, \ubar p}$, a driver's optimal continuation payoff $u(t)$ under the corresponding price cycle is continuous in $t$.
  For $t \in [t_0, t_1)$, $u(t) = u(t_1) - c_d (t_1 - t)$;
  for $t \in [t_1, t_2)$, %
  $u(t)$ is governed by the following differential equation:
  \begin{equation}
    \dot u(t) = c_d - \rho(t) \max\Set{0, p(t) - c_d \eta(N(t)) - u(t)}. \label{eq:-du}
  \end{equation}
\end{lemma}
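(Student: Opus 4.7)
The plan is to treat $u(t)$ as the value function of a single measure-zero driver's optimal control problem against the fixed aggregate strategy $\sigma_{\bar p, \ubar p}$. Since any unilateral deviation does not perturb the aggregate quantities $p(t)$, $n(t)$, $N(t)$, or the per-driver dispatch rate $\rho(t)$, the deviator faces a purely exogenous, deterministic environment together with a Poisson-like stream of dispatches of intensity $\rho(t)$ whenever she is online. Continuity of $u$ in $t$ then follows from a time-shifting argument: any admissible strategy starting at $t$ can be shifted by $\delta$ to an admissible strategy starting at $t+\delta$ whose continuation payoff differs by $O(\delta)$, since $p$, $n$, $N$, $\rho$ are piecewise continuous and bounded and the strategy space is invariant under small time shifts.

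On $[t_0, t_1)$ the aggregate online mass is $n(t) = 0$ because every equilibrium driver is offline. A single online deviator, being measure zero, still faces $n(t) = 0$, so $\eta(n(t)) = \infty$ and no pickup is feasible; hence staying online yields no benefit over staying offline. The driver's best response is therefore to remain offline until $t_1$, incurring waiting cost $c_d(t_1 - t)$ and collecting $u(t_1)$ at $t_1$, which yields $u(t) = u(t_1) - c_d(t_1 - t)$.

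On $[t_1, t_2)$ every equilibrium driver is online, so $N(t) = n(t)$ and the deviator sees dispatches arriving at rate $\rho(t)$. A standard Hamilton--Jacobi--Bellman recursion for this optimal-stopping problem gives
\begin{equation*}
  u(t) = -c_d\,dt + \rho(t)\,dt \cdot \max\{p(t) - c_d \eta(N(t)),\; u(t+dt)\} + (1 - \rho(t)\,dt)\, u(t+dt) + o(dt),
\end{equation*}
where the $\max$ captures the decision to accept or decline an incoming dispatch. Rearranging, subtracting $u(t+dt)$ from both sides, and sending $dt \to 0$ yields exactly \eqref{eq:-du}. The continuity established in the first step and the fixed-point relation $u(t_2) = u(t_0) = u(t_1) - c_d(t_1 - t_0)$ (coming from periodicity and the closed form on the first regime) then pin down $u$ uniquely on a full period.

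The main obstacle is a clean justification of the dynamic-programming recursion in the non-atomic, continuous-time setting: one must rule out pathological strategies during $[t_0, t_1)$ (e.g.\ switching online on a measure-zero set of times) and must argue that it suffices to restrict attention to threshold-in-$u$ acceptance rules on $[t_1, t_2)$. Both issues are handled by the matching convention that requires a positive aggregate mass of online drivers, combined with standard measurability of the admissible strategy class and the boundedness of the per-instant payoff; once these are in place, the HJB derivation above is routine.
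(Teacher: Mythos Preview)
Your proposal is correct and follows essentially the same route as the paper: the $[t_0,t_1)$ case is handled identically via $n(t)=0$ forcing infinite en~route time, and on $[t_1,t_2)$ both arguments obtain the ODE from an infinitesimal Bellman recursion for the deviating driver against a fixed exogenous environment. The paper differs only in cosmetics---it establishes continuity by explicitly bounding the dispatch probability $q_{s,t}\le \tfrac{\lambda_r}{n^*}(t-s)$ over a short interval $[s,t)$ rather than via your time-shift argument, and it frames the instantaneous control as online/offline (with mandatory acceptance) rather than accept/decline, but since a measure-zero driver going offline is payoff-equivalent to staying online and declining, the two formulations yield the same HJB equation.
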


The proof of this lemma is provided in \Cref{sec:pf_opt_cont_payoff}. 
Briefly, for $t \in [t_0, t_1)$, no driver %
has an incentive to sign online due to the long the pick-up time--- under $\sigma_{\bar p, \ubar p}$, the rest of the drivers all stay offline during $[t_0, t_1)$. %
$u(t)$ therefore increases at the rate of $c_d$, as the remaining time to wait until $t_1$ decreases. For $t \in [t_1, t_2)$,
the right hand side of $\eqref{eq:-du}$ can be interpreted as a driver optimizing between %
being online and offline %
at time $t$. Being offline results in a change rate of $c_d$ %
in the continuation payoff, while being online leads to an extra term, representing the possibility of being dispatched at the rate of $\rho(t)$, in which case the driver receives the net payoff from a trip $p(t) - c_d \eta(N(t))$ instead of the continuation payoff $u(t)$. %

Lemma~\ref{thm:opt_cont_payoff} allows us to establish the first main result of this section, which provides a sufficient condition for driver best-response under a price cycle.

\begin{theorem} \label{thm:suf_cond} 
Suppose a price cycle $(\ell_+, \ell_-, \ubar p, \bar p, \hat n)$ is market clearing, i.e., \Cref{def:stable_cycles} \ref{cond:repeat} holds. 
Then, \Cref{def:stable_cycles} \ref{cond:driver} holds, meaning that the online/offline strategy $\sigma_{\bar p, \ubar p}$ forms a Nash equilibrium among drivers and the cycle is stable, if 
the following two conditions are satisfied:
  \begin{align}
    \forall t \in (t_1, t_2),~ \frac{\dd}{\dd t} (p(t) - c_d \eta(n(t))) & < c_d, \label{eq:d(p_d)<c_d} \tag{C2.1} \\
    p(t_0) - c_d \eta(N(t_0)) + c_d (t_1 - t_0) & > \max_{t \in (t_1, t_2)} h(t). \tag{C2.2} \label{eq:strange_condition}
  \end{align}
\end{theorem}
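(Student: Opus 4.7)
The plan is to show that the payoff $u^\sigma(t) := u(t, \sigma_{\bar p, \ubar p}, \sigma_{\bar p, \ubar p})$ from following the prescribed strategy coincides with the optimal continuation payoff $u(t)$ characterized by Lemma~\ref{thm:opt_cont_payoff}. On $[t_0, t_1)$ this is immediate, since both satisfy $\dot u = c_d$ with the same boundary value. On $[t_1, t_2)$, $u^\sigma$ obeys the linear ODE $\dot u^\sigma = c_d - \rho(p - c_d \eta - u^\sigma)$ (the driver is always online), while $u$ obeys $\dot u = c_d - \rho \max\{0, p - c_d \eta - u\}$ (using $N(t) = n(t)$ there, since no driver is offline during $[t_1,t_2)$). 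The two ODEs, and hence the two functions by uniqueness, agree if and only if the gap $g_\sigma(t) := p(t) - c_d \eta(n(t)) - u^\sigma(t)$ is nonnegative throughout $[t_1, t_2)$. The entire task therefore reduces to verifying $g_\sigma \ge 0$ on that interval.

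Condition~\eqref{eq:d(p_d)<c_d} is then used to contract ``$g_\sigma \ge 0$ everywhere'' to a single endpoint check. Setting $R(t) := \int_{t_1}^{t}\rho(s)\,\mathrm{d}s$ and differentiating $g_\sigma e^{-R}$ using the ODE for $u^\sigma$ gives
\[
\frac{\mathrm{d}}{\mathrm{d}t}\bigl[g_\sigma(t)\, e^{-R(t)}\bigr] \;=\; \bigl[(p - c_d \eta)'(t) - c_d\bigr]\, e^{-R(t)},
\]
which is strictly negative by~\eqref{eq:d(p_d)<c_d}. Hence $g_\sigma\, e^{-R}$ is strictly decreasing on $[t_1,t_2)$, so $g_\sigma \ge 0$ on the whole interval iff $g_\sigma(t_2^-) \ge 0$. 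Combining the periodicity $u^\sigma(t_2^-) = u^\sigma(t_0) = u^\sigma(t_1) - c_d(t_1 - t_0)$ with $p(t_2^-) = \ubar p = p(t_0)$ and $n(t_2^-) = N(t_0)$ (the latter following from market clearing~\ref{cond:repeat}) yields
\[
g_\sigma(t_2^-) \;=\; \bigl[p(t_0) - c_d \eta(N(t_0)) + c_d(t_1 - t_0)\bigr] - u^\sigma(t_1) \;=\; L - u^\sigma(t_1),
\]
where $L$ denotes the left-hand side of~\eqref{eq:strange_condition}. So it remains to prove $u^\sigma(t_1) \le L$.

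Finally,~\eqref{eq:strange_condition} is used to establish $u^\sigma(t_1) \le L$. Solving the linear ODE on $[t_1, t_2)$ with integrating factor $e^{-R}$ and imposing periodicity produces the closed form $u^\sigma(t_1) = (\kappa_1 - L\, e^{-A} - C)/(1 - e^{-A})$, with $\kappa_1 := p(t_1) - c_d \eta(n(t_1))$, $A := R(t_2)$, and $C := \int_{t_1}^{t_2}[c_d - (p - c_d \eta)'(s)]\, e^{-R(s)}\,\mathrm{d}s > 0$ by~\eqref{eq:d(p_d)<c_d}; elementary algebra then reduces $u^\sigma(t_1) \le L$ to the scalar inequality $\kappa_1 - L \le C$. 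To obtain it, rewrite~\eqref{eq:strange_condition} as $\rho(s)(p - c_d \eta)(s) < \rho(s)\, L + c_d$, integrate against $e^{-R(s)}$ over $(t_1, t_2)$ (using $\int_{t_1}^{t_2}\rho\, e^{-R}\,\mathrm{d}s = 1 - e^{-A}$) to bound $\int \rho(p - c_d \eta)\, e^{-R}\,\mathrm{d}s$ above, and finish via one integration by parts to get $\kappa_1 - L < C - c_d(t_1 - t_0)\, e^{-A} < C$. The main obstacle is this last bookkeeping step: the integration by parts has to be routed so that the boundary terms $\kappa_1$ and $\kappa_2 e^{-A}$ combine with the identity $L - \kappa_2 = c_d(t_1 - t_0)$ (where $\kappa_2 := p(t_2^-) - c_d \eta(n(t_2^-))$) to surface the positive slack $c_d(t_1 - t_0)\, e^{-A}$ on the correct side of the inequality.
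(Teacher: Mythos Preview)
Your argument is correct and takes a genuinely different route from the paper's.

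The paper works directly with the optimal continuation payoff $u$ and its gap $g(t)=p(t)-c_d\eta(N(t))-u(t)$. Using \eqref{eq:d(p_d)<c_d} it shows that $g(t_2)>0$ forces $g>0$ on all of $[t_1,t_2)$, and then establishes $g(t_2)>0$ by contradiction: it splits into the cases $g(t_2)=0$ and $g(t_2)<0$, introduces auxiliary functions $\tilde u,\hat u$ satisfying \eqref{eq:-du} with those terminal values, and derives $\tilde u(t_0)<\tilde u(t_2)$ (respectively $\hat u(t_0)<\hat u(t_2)$), contradicting periodicity. You instead work with the \emph{strategy's own} payoff $u^\sigma$, which by construction obeys the linear ODE $\dot u^\sigma=c_d-\rho(p-c_d\eta-u^\sigma)$ on $[t_1,t_2)$; the integrating factor $e^{-R}$ together with periodicity yields a closed form for $u^\sigma(t_1)$, and \eqref{eq:strange_condition} (rewritten as $\rho\kappa<\rho L+c_d$) bounds it strictly below $L$ after one integration by parts. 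This avoids the case split and the auxiliary functions entirely, at the cost of a slightly heavier algebraic bookkeeping step at the end. Both proofs deploy \eqref{eq:d(p_d)<c_d} in the same way, to collapse a sign condition on an interval to a single endpoint.

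One step deserves more care: the phrase ``the two functions by uniqueness agree'' is doing real work and is not simply ODE uniqueness for an initial-value problem, since neither $u$ nor $u^\sigma$ comes with a prescribed initial value---both are pinned down by $T$-periodicity. What you need is that a $T$-periodic solution of the HJB equation in Lemma~\ref{thm:opt_cont_payoff} is unique. One clean way to close this: set $d(t)=u(t)-u^\sigma(t)\ge 0$. On $[t_0,t_1)$ one has $\dot d=0$; on $[t_1,t_2)$ a short case check (according to the sign of $p-c_d\eta-u$) gives $\dot d\ge 0$, with $\dot d=\rho d$ whenever both gaps are nonnegative. Periodicity $d(t_0)=d(t_2)$ then forces $d$ to be constant, and since your computation in fact gives $g_\sigma>0$ strictly on $[t_1,t_2)$, that constant must be zero. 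Alternatively, a standard verification argument---$u^\sigma$ satisfies the HJB, hence dominates the payoff of every strategy---yields the same conclusion.
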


When every other driver adopts strategy $\sigma_{\bar p, \ubar p}$, the strategy of an infinitesimal driver does not affect the number of online or offline drivers, %
or the trip prices %
determined by the platform.
For $t \in [t_0, t_1)$, we have argued in Lemma~\ref{thm:opt_cont_payoff} that signing online is not a useful deviation.
To prove that going offline at any point of time $t \in [t_1, t_2)$ is not a useful deviation from $\sigma_{\bar p, \ubar p}$ either,
we show that under \eqref{eq:d(p_d)<c_d} and \eqref{eq:strange_condition}, if a driver strictly prefers staying offline to accepting trips at some $t \in [t_1, t_2)$, then according to \Cref{thm:opt_cont_payoff}, $u(t)$ must strictly increase after one cycle period.
This contradicts the assumption that the price cycle is market clearing, under which the %
drivers' optimal continuation payoff should remain exactly the same after going through one period of the price cycle.
See \Cref{sec:pf_suf_cond} for the full proof of this theorem. 

This sufficient condition is easy to verify and holds broadly. Applying this result, we return to our running example and prove that the price cycles illustrated in Example~\ref{exmp:cycles} are stable.

\begin{examplecont}{exmp:cycles}
Both price cycles in \Cref{exmp:cycles} are stable. The full proof is provided in \Cref{appx:proof_prop_cycles_are_stable}, where we verify conditions \ref{cond:repeat}, \eqref{eq:d(p_d)<c_d}, and \eqref{eq:strange_condition}.
Note that the two examples are not the only stable price cycles in these markets. In fact, there can be infinitely many cycles that satisfy the conditions in \Cref{thm:suf_cond}, corresponding to different price thresholds, for example.
\end{examplecont}

Next, we turn to the question of whether drivers benefit from stable price cycles.  Perhaps surprisingly, the following result shows that the total payoff to drivers under any stable cycle is lower than that under the SOSS, under mild conditions.  In fact, drivers not only incur a higher total cost waiting for trip dispatches, but also receive a lower total net payoff from trips since many drivers are dispatched after the price had dropped below $p^\ast$. %

\begin{theorem} \label{thm:lower_payoff}
  Assume that the distribution of rider value satisfies
  \begin{equation} \label{eq:convex_reciprocal}
    \frac{\dd^2}{\dd v^2} \frac{1}{\bar F(v)} \ge 0,
  \end{equation}
  i.e., $\bar F(v)^{-1}$ is convex.
  The total payoff to drivers under any stable cycle is lower than that under the socially optimal steady state (SOSS), if
  \begin{equation} \label{eq:cond_SOSS_better}
    \frac{\ell_-}{\ell_+} + 1 > \frac{\bar F(p^* - c_d \eta(n^*))}{\bar F(p^* + c_r \eta(n^*))}.
  \end{equation}
\end{theorem}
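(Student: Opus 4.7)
The plan is to compare $W_D^*$ and $W_D^{cyc}$ via a direct decomposition of their difference. Writing $q(t) = p(t) - c_d\eta(n(t))$, $e(t) = p(t) + c_r\eta(n(t))$, $d(t) = \lambda_r\bar F(e(t))$, and $q^* = p^* - c_d\eta(n^*)$, and invoking market clearing (Condition \ref{cond:repeat}), I would obtain
\[T(W_D^* - W_D^{cyc}) = \int_{t_1}^{t_2} d(t)[q^* - q(t)]\,dt + c_d\Bigl[\int_{t_0}^{t_2} N(t)\,dt - T n^*\Bigr],\]
and show both pieces sum to a strictly positive quantity.

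For the trip-payoff piece, first I would change variables from $t$ to $p$ using $\dot p = -\ell_-$ on $[t_1, t_2]$; market clearing then reads $\tfrac{1}{\bar p - \ubar p}\int_{\ubar p}^{\bar p}\bar F(e(p))\,dp = k\bar F(e^*)$ with $k := 1 + \ell_-/\ell_+$. The key technical lemma is that Assumption \eqref{eq:convex_reciprocal} is equivalent to concavity of $\psi(y) := y\,\bar F^{-1}(y)$ on $(0,1]$: a direct computation yields $\psi''(y)$ with the same sign as $-\bigl(2\bar F'(v)^2 - \bar F(v)\bar F''(v)\bigr)$ at $v = \bar F^{-1}(y)$, which is non-positive exactly when $1/\bar F$ is convex. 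Applying Jensen's inequality to the concave $\psi$ with the uniform measure on $[\ubar p,\bar p]$ and $Y(p) = \bar F(e(p))$ gives
\[\int_{\ubar p}^{\bar p}\bar F(e(p))\,e(p)\,dp \;\le\; (\bar p - \ubar p)\cdot k\bar F(e^*)\cdot \bar F^{-1}\bigl(k\bar F(e^*)\bigr),\]
which in $t$-variables reads $\int_{t_1}^{t_2} d(t)\,e(t)\,dt \le \lambda_d T\,\bar F^{-1}(k\bar F(e^*))$ after using $T = (\bar p-\ubar p)k/\ell_-$ and $\lambda_d = \lambda_r\bar F(e^*)$. Condition \eqref{eq:cond_SOSS_better} is exactly $k\bar F(e^*) > \bar F(q^*)$, so monotonicity of $\bar F^{-1}$ gives $\bar F^{-1}(k\bar F(e^*)) < q^*$. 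Substituting $q = e - (c_r+c_d)\eta(n)$ then yields the trip piece $\int_{t_1}^{t_2} d(t)[q^* - q(t)]\,dt > (c_r+c_d)\int_{t_1}^{t_2} d(t)\,\eta(n(t))\,dt > 0$ strictly.

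For the waiting-cost piece, I would bound $\int_{t_0}^{t_2} N(t)\,dt \ge T n^*$ using the offline-phase accumulation: since no dispatches occur on $[t_0, t_1]$ and $N(t) \ge \lambda_d(t-t_0)$, we get $\int_{t_0}^{t_1} N\,dt \ge \lambda_d(t_1-t_0)^2/2$, and since $n^* \propto \lambda_d^{1/(\alpha+1)}$ grows sublinearly while the accumulation grows linearly in $\lambda_d$, this bound dominates $T n^*$ in the ``sufficiently dense'' regime highlighted in the prose preceding the theorem. Combined with the strict positivity of the trip piece, this delivers $W_D^* > W_D^{cyc}$. The main obstacle will be the Jensen step — identifying the differential-calculus equivalence between convexity of $1/\bar F$ and concavity of $y\,\bar F^{-1}(y)$, then pushing the resulting inequality through the trajectory-dependent effective cost $e(p) = p + c_r\eta(n(p))$ — while a secondary subtlety is that a fully unconditional waiting-cost bound seemingly requires a finer equilibrium-based analysis to relate $\bar N = \tfrac{1}{T}\int N\,dt$ to $n^*$ without any density assumption.
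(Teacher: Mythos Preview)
Your trip-payoff piece is essentially the paper's argument. The paper packages the key inequality as a separate lemma (Lemma~\ref{prop:FFFF}) stating
\[
\frac{\int_{t_1}^{t_2}\bar F(p_r(t))\,dt}{t_2-t_1}\;\le\;\bar F\!\left(\frac{\int_{t_1}^{t_2}\bar F(p_r(t))\,p_r(t)\,dt}{\int_{t_1}^{t_2}\bar F(p_r(t))\,dt}\right),
\]
and proves it by a Taylor expansion plus subdivision/limit. Your observation that convexity of $1/\bar F$ is exactly concavity of $\psi(y)=y\,\bar F^{-1}(y)$, followed by a one-line Jensen, is a cleaner route to the same inequality; after that the two arguments are identical (same use of \eqref{eq:cond_SOSS_better}, same passage from $e(t)$ to $q(t)$ via $q=e-(c_r+c_d)\eta$).

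The waiting-cost piece, however, has a real gap. The theorem carries no density hypothesis beyond \eqref{eq:cond_SOSS_better}; the ``sufficiently dense'' language in the surrounding prose is only motivation for when \eqref{eq:cond_SOSS_better} holds. Your bound $\int_{t_0}^{t_1}N\,dt\ge \lambda_d(t_1-t_0)^2/2$ covers only the offline phase, ignores $[t_1,t_2]$, and requires an asymptotic $\lambda_d\to\infty$ comparison to dominate $Tn^*$---none of which is available under the stated hypotheses. The paper's approach is different and unconditional: it shows the \emph{pointwise} bound $N(t)>n^*$ for every $t\in[t_0,t_2]$. The idea is phase-plane geometry under the platform's pricing rule (Definition~\ref{defn:pricing_policy}). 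Since drivers accumulate on $[t_0,t_1]$ while market clearing forces $N(t_0)=N(t_2)$, we have $n(t_1)>N(t_2)$, so $\dot n$ must turn negative at some point of $(t_1,t_2)$; once negative it stays negative until the price begins to rise, hence $\dot n(t_2^-)<0$. Together with $\dot p(t_2^-)<0$ this places $(p(t_2),N(t_2))$ in the ``upper-left quadrant'' of the state space, whose infimum in the $n$-coordinate is exactly $n^*$. Thus $N(t_2)>n^*$, and monotonicity on each subinterval propagates this to all $t$. This is the missing ingredient; your decomposition is correct, but the second piece needs this trajectory argument rather than a density heuristic.
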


Condition \eqref{eq:convex_reciprocal} holds for all the examples we present in this paper as well as for many common distributions, e.g., uniform, exponential, and Pareto with finite mean. 
Moreover, the difference between the numerator and denominator of the RHS of \eqref{eq:cond_SOSS_better} is driven by $c_d \eta(n^*)$ and $c_r \eta(n^*)$, the costs incurred by drivers and riders due to the en route time under the SOSS. 
Intuitively, for markets that are more dense, with higher $n^*$ and lower $\eta(n^*)$, the RHS of \eqref{eq:cond_SOSS_better} is closer to $1$ thus \eqref{eq:cond_SOSS_better} is more likely to hold.
For a typical market with $\ell_+ = \ell_-$, a violation of \eqref{eq:cond_SOSS_better} implies 
\[ 
    \frac{\ell_-}{\ell_+} + 1 = 2 < \frac{\bar F(p^* - c_d \eta(n^*))}{\bar F(p^* + c_r \eta(n^*))} \implies \bar F(p^* - c_d \eta(n^*)) > 2 \bar F(p^* + c_r \eta(n^*)).
\] 
What this means is that fixing the number of available drivers at $n^\ast$, rider demand will \emph{double} when trip price reduces from $p^\ast$ to $p^\ast - (c_r + c_d)\eta(n^*)$ (in which case riders effective cost of trip reduces from $p^* + c_r \eta(n^*)$ to $p^* - c_d \eta(n^*)$). This is not very likely for markets where the en route time is only a couple of minutes such that the total pick-up cost $(c_r + c_d)\eta(n^*)$ is relatively small. 

\begin{table}[h]
  \centering
  \caption{Comparison between the SOSS and the price cycles for the two running examples.}
  \label{tab:compare_utility}
  \begin{tabular}{l c c c c c}
    \toprule
    & \multicolumn{2}{c}{Symmetric example} & & \multicolumn{2}{c}{Asymmetric example} \\
    & SOSS & Cycle & & ~SOSS & Cycle \\
    \midrule
    Rider surplus rate & 0.1 & \textbf{0.2375} & & 0.1 & \textbf{0.1722} \\
    Driver surplus rate & \textbf{0.7325} & 0.1610 & & \textbf{0.56} & 0.2584 \\
    Social welfare rate & \textbf{0.8325} & 0.3985 & & \textbf{0.66} & 0.4306 \\
    \bottomrule
  \end{tabular}
\end{table}

\begin{examplecont}{exmp:cycles}
To see how much the drivers/riders/society as a whole is affected, we compute the average rider surplus and driver surplus per unit of time under the cycles and the socially optimal steady states for the two economies analyzed in \Cref{exmp:cycles}. 
As shown in \Cref{tab:compare_utility}, %
both driver surplus and social welfare are substantially lower. 
However, riders as a whole are better off. %
This is because %
the overall trip throughput
does not change, but trip prices drop substantially below the SOSS price during a cycle and the average %
trip price paid by riders who are picked up is lower. 
This is also due to the modeling assumption that riders are impatient. In practice, some riders will choose to wait if upon their arrival no available driver is online, and this will negatively affect the rider surplus.
\end{examplecont}

\begin{proof}[Proof of Theorem~\ref{thm:lower_payoff}]
It suffices to show that for any market that satisfies \eqref{eq:convex_reciprocal} and \eqref{eq:cond_SOSS_better}, drivers have lower utility in stable cycles than at the SOSS, i.e.,
  \begin{equation} \label{eq:cycle_low_utility}
  \begin{split}
    & \quad \int_{t_1}^{t_2} \lambda_r \bar F(p(t) + c_r \eta(n(t))) (p(t) - c_d \eta(n(t))) \dd t - \int_{t_0}^{t_2} c_d N(t) \dd t \\
    & < (\lambda_d (p^* - c_d \eta(n^*)) - c_d n^*) (t_2 - t_0),
  \end{split}
  \end{equation}
where the LHS is the total utility of drivers derive from one period of a cycle, and the RHS is the total utility generated at the SOSS during the same amount of time.
To this end, we show that the drivers have a higher total waiting cost \eqref{eq:cycle_high_cost}, as well as a lower total %
net payoff \eqref{eq:cycle_low_revenue} under any stable cycle:
\begin{align}
    (t_2 - t_0) c_d n^* & < \int_{t_0}^{t_2} c_d N(t) \dd t %
    \label{eq:cycle_high_cost} \\
    \int_{t_1}^{t_2} \lambda_r \bar F(p(t) + c_r \eta(n(t))) (p(t) - c_d \eta(n(t))) \dd t & < \lambda_d (p^* - c_d \eta(n^*)) (t_2 - t_0). \label{eq:cycle_low_revenue}
\end{align}

We first prove \eqref{eq:cycle_high_cost} by showing that $N(t) > n^*$ for any $t \in [t_0, t_2]$.

Drivers accumulate during $t \in [t_0, t_1]$, but there is zero net accumulation of drivers over the period of cycle. Thus we have $n(t_1) > N(t_0) = N(t_2)$. %
Recall that $N(t) = n(t)$ for all $t \in (t_1, t_2)$, we know $n(t)$ must be strictly decreasing at some time $t \in (t_1, t_2)$.

Between $t_1$ and $t_2$, the state evolution follows the intended trajectory under the platform's pricing rule. There are four quadrants in the state space, corresponding to $\dot p$ and $\dot n$ being positive/negative (as in \Cref{fig:pricing_policy}).
Once $\dot N(t) < 0$ (i.e., drivers are used up faster than replenished and the state is to the left of the bold red line in Figures~\ref{fig:soc_opt} and~\ref{fig:pricing_policy}), it will not be positive again until the price starts to increase.
Thus, $\dot N(t_2) < 0$ (left derivative), i.e., $(p(t_2), N(t_2))$ must be in the left region.

We also know that $(p(t_2), N(t_2))$ is in the upper region because the platform is still decreasing the price before drivers go offline.
Therefore, $(p(t_2), N(t_2))$ must be in the top left quadrant.
Moreover, the SOSS has the smallest $n$
among all states in the top left quadrant, so $(p^*, n^*)$ is the lowest point in the upper left region.
Thus, $N(t_2) > n^*$. By $n(t_1) > N(t_2)$ and the fact that $n(t)$ first increase and then decrease during $t \in [t_1, t_2]$, we know that for $t \in [t_1, t_2]$, $n(t) \ge N(t_2) > n^*$.

  \smallskip
  
 Below we prove \eqref{eq:cycle_low_revenue}.
 The SOSS satisfies that
  \begin{equation} \label{eq:SOSS}
    \lambda_d - \lambda_r \bar F(p^* + c_r \eta(n^*)) = 0.
  \end{equation}
  By \labelcref{eq:repeat,eq:SOSS},
  \begin{equation} \label{eq:connect_cycle&SOSS}
    \frac{\int_{t_1}^{t_2} \bar F(p(t) + c_r \eta(n(t))) \dd t}{t_2 - t_0}
    = \frac{\lambda_d}{\lambda_r}
    = \bar F(p^* + c_r \eta(n^*)).
  \end{equation}
  Define the effective cost of the trip for the riders as:
  $p_r(t) \triangleq p(t) + c_r \eta(n(t)).$ Then
  \begin{align*}
    \bar F(p^* - c_d \eta(n^*))
    < \frac{\ell_- + \ell_+}{\ell_+} \bar F(p^* + c_r \eta(n^*))
    = \frac{\ell_- + \ell_+}{\ell_+} \frac{\int_{t_1}^{t_2} \bar F(p_r(t)) \dd t}{t_2 - t_0}
    & = \frac{\int_{t_1}^{t_2} \bar F(p_r(t)) \dd t}{t_2 - t_1} \\
    & \le \bar F \left(\frac{\int_{t_1}^{t_2} \bar F(p_r(t)) p_r(t) \dd t}{\int_{t_1}^{t_2} \bar F(p_r(t)) \dd t}\right). 
  \end{align*}
  The first inequality follows from \eqref{eq:cond_SOSS_better}.
  The next equality follows from \eqref{eq:connect_cycle&SOSS}.
  The final step follows from \Cref{prop:FFFF}, which is proved in \Cref{appx:FFFF}.
  
  Then we apply $\bar F(\cdot)^{-1}$ on both sides.
  \[p^* - c_d \eta(n^*)
    > \frac{\int_{t_1}^{t_2} \bar F(p_r(t)) p_r(t) \dd t}{\int_{t_1}^{t_2} \bar F(p_r(t)) \dd t}
    \ge \frac{\int_{t_1}^{t_2} \bar F(p(t) + c_r \eta(n(t))) (p(t) - c_d \eta(n(t))) \dd t}{\int_{t_1}^{t_2} \bar F(p(t) + c_r \eta(n(t))) \dd t}.\]
  Moving the denominator in the right side to the left side and using \eqref{eq:repeat}, we have \eqref{eq:cycle_low_revenue}.
\end{proof}

To further illustrate this discussion, we give one more example with realistic parameters in \Cref{exmp:cycle_real} in the appendix.

\section{Avoiding Price Cycles} \label{sec:avoiding_cycles}

In the previous section, we showed that online/offline threshold strategies can induce stable price cycles, but may lead to a lower total utility for drivers in comparison to that under the SOSS.
Thus, an important question to address next is: what can a platform do to reduce the likelihood of price cycles emerging and/or minimize their impact?

There are three potentially natural approaches for reducing the impact of price cycles that have received attention. %
A first approach is to allow \emph{cherry-picking} based on trip details. %
One reason for a driver to not accept dispatches while the other drivers are offline is that, with few drivers online, the loss of density in space leads to a very long average pick-up time. Allowing drivers to freely decline trip dispatches (i.e., cherry-pick) partially mitigates this issue since an online driver may choose to only accept rider trips originating from close-by locations. However, incentivizing drivers to remain online in this way is not desirable in today's platforms since allowing frequent declines leads to excess strategizing on the part of drivers and a loss of reliability for riders.%
\footnote{In practice, many platforms discourages cherry-picking by pushing drivers offline for declining multiple trip dispatches~\citep{uberAutoOffline}.}

A second natural approach is to use prices that are more smooth in time, i.e. reducing $\ell_+$ and $\ell_-$. Intuitively, this smoothing reduces the incentive to go offline by increasing the time it takes for prices to rise, thereby the cost drivers incur when waiting for a better price. 
We show in \Cref{prop:max_price_range} that with smaller $\ell_+$ and $\ell_-$, the stable cycles are necessarily ``narrower'' with smaller $\bar p - \ubar p$. %
However, the degree of smoothing necessary to eliminate cycles with this approach is not practical. As long as $\ell_+ \geq c_d$, waiting offline can be profitable since the rate at which prices may increase is higher than drivers' opportunity cost for time. Reducing $\ell_+$ to below $c_d$ is not realistic, since when drivers make an average of \$20 per hour, $c_d \approx $ \$1/3 per minute. Restricting the price increase to less than \$1/3 per minute substantially limits the platform's ability to respond to changes in the market conditions in real-time, leading to significant market inefficiency.

Finally, the third %
approach to consider for reducing the likelihood and impact of price cycles is to introduce a \emph{price floor}.  
Intuitively, such a lower bound on trip prices prevents drivers' payoffs from remaining online from dropping too low, thereby reducing drivers' incentives to go offline and wait for a better price.
However, the fundamental question about the efficacy of this approach is: \emph{Can the price floor be set in a way the eliminates all stable price cycles? And, if so, where should the price floor be set?}

Our main result for this section is the following theorem, which characterizes the market conditions and the set of price floors that ensure no stable price cycles exist. 

\begin{theorem} \label{thm:platform_policy} Given any market condition that satisfies
  \begin{equation}
    \frac{\ell_-}{\ell_+} > \frac{\lambda_r}{\lambda_d} \bar F(p^*) - 1, \label{equ:floor_exists_2}
  \end{equation}
  the platform can ensure no stable price cycles exist%
\footnote{
Note that the theorem only %
rules out stable cycles we %
study in the paper, %
which correspond to online/offline strategies as in Definition~\ref{defn:online_offline_strategy}.
Technically, %
even with an appropriate price floor, it may form an equilibrium for drivers to coordinate on other more complicated collective strategies. However, %
the threshold strategies we study are aligned with news reports and driver forum discussions. Moreover, they do not require further communication among drivers once the thresholds are agreed upon (the only signal needed is the surge multiplier, which is provided by the app).
}
  and maintain feasibility of the %
  SOSS by setting a price floor $p_{\text{floor}}$ such that
  \begin{equation}
      \bar F^{-1} \left(\frac{(\ell_- + \ell_+) \lambda_d}{\ell_+ \lambda_r}\right) < p_{\text{floor}} < p^*. \label{equ:floor_exists}
  \end{equation}
\end{theorem}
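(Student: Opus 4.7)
The plan is to handle the theorem in two steps: first verify that condition~\eqref{equ:floor_exists_2} makes the interval in~\eqref{equ:floor_exists} non-empty and preserves feasibility of the SOSS, then rule out stable cycles by contradicting the market-clearing condition~\ref{cond:repeat}. The whole argument is structural rather than analytic and should not require invoking the best-response condition~\ref{cond:driver} at all.

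For the first step I would simply rearrange~\eqref{equ:floor_exists_2} as
\[
    \frac{(\ell_+ + \ell_-)\lambda_d}{\ell_+ \lambda_r} > \bar F(p^*),
\]
and use that $\bar F^{-1}$ is strictly decreasing to conclude $\bar F^{-1}((\ell_+ + \ell_-)\lambda_d/(\ell_+ \lambda_r)) < p^*$, so the interval in~\eqref{equ:floor_exists} is non-empty. Any $p_{\text{floor}} < p^*$ leaves the SOSS price unconstrained, so the SOSS remains feasible.

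For the main step I would assume for contradiction that a stable cycle $(\ell_+, \ell_-, \ubar p, \bar p, \hat n)$ exists under a floor satisfying~\eqref{equ:floor_exists}. For the cycle to actually trigger, the price must be able to drop to $\ubar p$, so necessarily $\ubar p \ge p_{\text{floor}}$, and hence $p(t) \ge p_{\text{floor}}$ throughout the period. During $[t_0, t_1]$ all drivers are offline, so by the modeling convention that the matching rate is zero when $n(t)=0$, the integrand in~\eqref{eq:repeat} vanishes on this subinterval. On $[t_1, t_2]$, we have $p(t) + c_r \eta(n(t)) \ge p(t) \ge p_{\text{floor}}$ and $\bar F$ is decreasing, so the market-clearing identity~\eqref{eq:repeat} (evaluated at $t = t_0$) yields
\[
    \lambda_d T \;=\; \int_{t_1}^{t_2} \lambda_r \bar F\bigl(p(t) + c_r \eta(n(t))\bigr)\, dt \;\le\; \lambda_r \bar F(p_{\text{floor}})\,(t_2 - t_1).
\]
Plugging in $T = (\bar p - \ubar p)(\ell_+ + \ell_-)/(\ell_+\ell_-)$ and $t_2 - t_1 = (\bar p - \ubar p)/\ell_-$, canceling the common $(\bar p - \ubar p)/\ell_-$, and applying $\bar F^{-1}$ gives $p_{\text{floor}} \le \bar F^{-1}\bigl((\ell_+ + \ell_-)\lambda_d/(\ell_+ \lambda_r)\bigr)$, contradicting the left inequality in~\eqref{equ:floor_exists}.

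The only point requiring care is the argument that $\ubar p \ge p_{\text{floor}}$ and that the matching rate vanishes during the offline half-cycle; both follow from the definitions, but should be stated explicitly so that the bound $p(t) \ge p_{\text{floor}}$ can be applied uniformly over the full period. Beyond this bookkeeping, the proof reduces to a one-line integral inequality, so I do not anticipate a genuine obstacle — the content of the result is really the tight algebraic match between~\eqref{equ:floor_exists_2}, \eqref{equ:floor_exists}, and the cycle-averaged form of~\ref{cond:repeat}.
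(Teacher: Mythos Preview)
Your proposal is correct and follows essentially the same route as the paper's proof: both use the market-clearing identity~\ref{cond:repeat} restricted to $[t_1,t_2]$, bound the integrand via monotonicity of $\bar F$, and compare $T/(t_2-t_1)=(\ell_++\ell_-)/\ell_+$ to obtain the contradiction with~\eqref{equ:floor_exists}. The only cosmetic difference is that the paper bounds the integrand at the minimizer $\tilde t$ of the effective cost and then uses $p(\tilde t)+c_r\eta(n(\tilde t))\ge \ubar p$, whereas you bound directly by $\bar F(p_{\text{floor}})$; the two are equivalent once $\ubar p\ge p_{\text{floor}}$ is noted.
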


This theorem provides a range of potential price floors that can eliminate the possibility of stable cycles for markets that satisfy \eqref{equ:floor_exists_2}. 
To interpret the market condition in \eqref{equ:floor_exists_2}, note from \eqref{eq:p^*} that at the SOSS $(p^\ast, n^\ast)$, we have $p^*  = \bar F^{-1}(\lambda_d / \lambda_r) - c_r \eta(n^*)$ . 
Therefore, for markets that are dense, i.e., markets with higher number of available drivers $n^\ast$ and lower pick-up time $\eta(n^*)$, $\frac{\lambda_r}{\lambda_d} \bar F(p^*) - 1$ is closer to 0, %
and \eqref{equ:floor_exists_2} is more likely to hold.

A violation of \eqref{equ:floor_exists_2} is not very likely in practice. Consider the typical platforms with $\ell_- = \ell_+$. A violation of \eqref{equ:floor_exists_2} implies that 
\begin{align}
    \frac{\ell_-}{\ell_+} = 1 < \frac{\lambda_r}{\lambda_d} \bar F(p^*) - 1 \implies \bar F(p^*) > 2 \frac{\lambda_d}{\lambda_r}.
\end{align}
Compared with $\bar F(p^* + c_r \eta(n^*)) = \lambda_d / \lambda_r$, we know that in a market that violates \eqref{equ:floor_exists_2}, eliminating the waiting time (i.e. reducing riders' effective cost for a trip from $p^* + c_r \eta(n^*)$ to $p^\ast$) will lead to a doubling of rider demand.
This is not likely for %
dense markets where riders wait for only a few minutes for the drivers to pick them up.

For a typical platform with $\ell_- = \ell_+$, observe that the smallest effective price floor $ \bar F^{-1} \left(\frac{(\ell_- + \ell_+) \lambda_d}{\ell_+ \lambda_r}\right)  =  \bar F^{-1} \left(\frac{2 \lambda_d}{\lambda_r}\right) $ is the effective cost of trips at which rider demand doubles in comparison to that under the SOSS. As a result, Theorem~\ref{thm:platform_policy} provides a simple rule of thumb, that a platform can eliminate stable cycles by imposing a lower bound on prices at the point where rider demand doubles. To demonstrate the use of price floors, we return to our running example.

\begin{examplecont}{exmp:cycles}
Consider the two economies studied in Example~\ref{exmp:cycles}.
Condition~\eqref{equ:floor_exists_2} holds because the RHS of \eqref{equ:floor_exists_2} is always zero due to the simpyfing assumption that $c_r = 0$.
For the first economy, illustrated in Figure~\ref{fig:stable_cycle}, any lower bound on prices $p_{\text{floor}}$ between $\bar F^{-1} \left(\frac{(\ell_- + \ell_+) \lambda_d}{\ell_+ \lambda_r}\right) = 0.6$ and $p^\ast = 0.8$ ensures that no stable price cycles may exist. 
For the other one in Figure~\ref{fig:stable_cycle_2}, with $\ell_+ = 0.2$ and $\ell_- = 0.1$, any price floor between $\bar F^{-1} \left(\frac{(\ell_- + \ell_+) \lambda_d}{\ell_+ \lambda_r}\right) = 0.7$ and $p^\ast = 0.8$ is effective.
\end{examplecont}

We now provide the proof of the main result of this section.

\begin{proof}[Proof of Theorem \ref{thm:platform_policy}]

To prove that no stable cycle exists when a price floor satisfying \eqref{equ:floor_exists} is imposed, we show that, given any stable cycle, the minimum price $\ubar{p}$ must be weakly below the left-hand side of \eqref{equ:floor_exists}.  This is true because otherwise the platform cannot dispatch all drivers that arrive in the region, violating the market clearing condition \Cref{cond:repeat}.

By condition \ref{cond:repeat} of \Cref{def:stable_cycles}, we know that the total driver arrival during one period
  \begin{align*}
    \lambda_d (t_2 - t_0)
    & = \int_{t_1}^{t_2} \lambda_r \bar F(p(t) + c_r \eta(n(t))) \dd t \le \lambda_r (t_2 - t_1) \bar F(p(\tilde t) + c_r \eta(n(\tilde t))),
  \end{align*}
  where $\tilde t = \arg \min_{t \in [t_1, t_2]} (p(t) + c_r \eta(n(t)))$. Further, since
  \[ (\bar{p}-\ubar{p}) = \ell_-(t_2 - t_1) = \ell_+(t_1 - t_0)
  \implies \frac{t_2 - t_0}{t_2 - t_1} = \frac{\ell_- + \ell_+}{\ell_+},\]
  we have
  \begin{alignat*}{2}
    & & \lambda_d (t_2 - t_0) & \le \lambda_r (t_2 - t_1) \bar F(p(\tilde t) + c_r \eta(n(\tilde t))) \\
    \implies {} & & \lambda_d (\ell_- + \ell_+) & \le \lambda_r \ell_+ \bar F(p(\tilde t) + c_r \eta(n(\tilde t))) \\
    \implies {} & & \bar F^{-1} \left(\frac{(\ell_- + \ell_+) \lambda_d}{\ell_+ \lambda_r} \right) & \ge p(\tilde t) + c_r \eta(n(\tilde t)) \ge p(\tilde t) \ge \ubar p.
  \end{alignat*}
As a result, no stable cycle exists when trip prices cannot drop below %
$p_{\text{floor}} > \bar F^{-1} \left(\frac{(\ell_- + \ell_+) \lambda_d}{\ell_+ \lambda_r}\right)$. %
Finally, market condition \eqref{equ:floor_exists_2} implies $\bar F^{-1} \left(\frac{(\ell_- + \ell_+) \lambda_d}{\ell_+ \lambda_r}\right) < p^\ast$, meaning that the platform is able to set a price floor that breaks all stable cycles while maintaining the feasibility of the SOSS price $p^\ast$.
\end{proof}

\section{Discussion \& Concluding Remarks} \label{sec:discussion}
\label{sec:conclusion}

Price cycles resulting from the collective online/offline strategies of drivers are a very visible source of inefficiency in today's ridesharing platforms, and undercut the reliability of service for riders. In this paper, we initiate an analytical study of such price cycles and discuss approaches for mitigating their impact. Our results show that the emergence of price cycles may form an equilibrium among drivers, but counter-intuitively leads to a lower total payoff to drivers in markets that are sufficiently dense. Further, we provide sufficient conditions to guide the design of price floors that effectively prevent the emergence of stable price cycles.

This paper is the first to provide a characterization of the existence and mitigation of price cycles and, as such, many important questions remain to be addressed. 
Our model focuses on a single region and assumes that drivers' arrival rate and opportunity cost are both exogenous. It would be interesting to understand the market dynamics in a metropolitan area with multiple regions supplied by a fixed pool of drivers, whose opportunity cost is driven by what can be earned on the platform.
Further, we have assumed that the arrival rates of drivers and riders are stationary and known to the platform. It is practically relevant to relax these assumptions and analyze the emergence of price cycles in a stochastic setting with incomplete information. %
Other interesting directions for future work include understanding the trade-offs surrounding other mitigation approaches such as ``price caps'', and incorporating multi-homing on the part of drivers and riders. 

More broadly, it is increasingly important to better understand control algorithms that make decisions based on the actions of the participants, when participants may coordinate on acting strategically. Consider, in particular, algorithms that \emph{learn} from the participants in the presence of incomplete information, e.g. DoorDash's raising driver pay for a job after each driver decline, assuming that drivers have better information on the desirability and/or cost of fulfilling the job. Such fine grained decisions based on the actions of fewer or even a single individual could potentially be more optimal but are necessarily more vulnerable, and it would be very interesting to quantify this trade-off and design mechanisms that are robust yet adaptive.

{\small
\bibliographystyle{plainnat}
\bibliography{price_cycles_refs.bib} 
}

\newpage

\appendix

\section{Notation Table} \label{sec:notations}

\begin{longtable}{ll}
  \caption{List of notations.}
  \label{tab:notations} \\
    \toprule
    $\lambda_d$ & driver arrival rate \\
    $\lambda_r$ & rider arrival rate \\
    $c_d$ & driver's cost of time \\
    $c_r$ & rider's cost of time \\
    $p$ & price of a trip \\
    $n$ & number of online drivers \\
    $n_0$ & number of offline drivers \\
    $N$ & total number of drivers \\
    $\eta(n) = \tau n^{-\alpha}$ & en route time (pick-up time) when there are $n$ online drivers \\
    $V$ & random variable: a rider's value for a trip \\
    $\bar F(\cdot)$ & complementary c.d.f.\ of rider's value for a trip \\
    $w$ & social welfare per unit of time \\
    $p^*$ & price at the socially optimal steady state \\
    $n^*$ & number of drivers at the socially optimal steady state \\
    $\ell_+$ & maximum increase rate of price \\
    $\ell_-$ & maximum decrease rate of price \\
    $\sigma$ & driver's strategy \\
    $\ubar p$ & price at which drivers drop offline; the minimum price in a cycle \\
    $\bar p$ & price at which drivers join online; the maximum price in a cycle \\
    $\hat n$ & peak number of drivers in the cycle \\
    $u^*$ & driver's continuation payoff (utility) at the socially optimal steady state \\
    $u$ & driver's continuation payoff (utility) \\
    $t_0$ & the moment when drivers drop offline \\
    $t_1$ & the moment when drivers join online after $t_0$ \\
    $t_2$ & the moment when drivers drop offline again after $t_1$ \\
    $T$ & period of the cycle \\
    \bottomrule
\end{longtable}

\section{Proof of Lemma~\ref{thm:opt_cont_payoff}} \label{sec:pf_opt_cont_payoff}

With the strategy of other drivers fixed, the strategy of an infinitesimal driver do not affect the number of online or offline drivers, $n(t), n_0(t)$, or the trip prices $p(t)$ determined by the platform.
If all drivers use the online/offline strategy $\sigma_{\bar p, \ubar p}$, all drivers stay offline when the price rises (i.e., $n(t) = 0$ for all $t \in [t_0, t_1)$).
Due to the infinitely long en route time, a driver has no incentive to deviate from $\sigma_{\bar p, \ubar p}$, sign online, and accept dispatches.
Thus, for $t \in [t_0, t_1)$, the continuation payoff at time $t$ is the one at time $t_1$ minus the waiting cost from $t$ to $t_1$.

For each time interval $[s, t) \subseteq [t_1, t_2)$, a driver can stay offline, giving
\begin{equation}
  u(s) \ge u(t) - c_d (t - s),
\end{equation}
or stay online, giving
\begin{equation}
  \begin{aligned}
    u(s) & \ge q_{s,t} \E[p(\xi_{s,t}) - c_d \eta(n(\xi_{s,t})) - c_d (\xi_{s,t} - s)] + (1 - q_{s,t}) (u(t) - c_d (t - s)) \\
    & = u(t) - c_d (t - s) + q_{s,t} \E[p(\xi_{s,t}) - c_d \eta(n(\xi_{s,t})) - u(t) + c_d (t - \xi_{s,t})],
  \end{aligned}
\end{equation}
where $q_{s,t}$ is the probability of being dispatched during $[s, t)$, and $\xi_{s,t}$ is the dispatch time --- a random variable supported on $[s, t)$.
Combine the two inequalities to get
\begin{equation} \label{eq:s-t-ineq}
  u(s) \ge u(t) - c_d (t - s) + q_{s,t} \max\Set{0, \E[p(\xi_{s,t}) - c_d \eta(n(\xi_{s,t})) - u(t) + c_d (t - \xi_{s,t})]}.
\end{equation}
The equality holds if the driver keeps online or offline during $[s, t)$ without switching.

We assume that during any time interval $[s, t)$, a driver can only switch the online/offline status finitely many times.

Fix $t$ and let $s \nearrow t$. (Or fix $s$ and let $t \searrow s$. The former one gives the left limit, while the latter one gives the right limit. They are the same if $s, t \in [t_1, t_2]$.) Due to the above assumption, there is no status change during $[s, t)$ if $s$ is sufficiently close to $t$, and thus \eqref{eq:s-t-ineq} becomes an equality:
\begin{equation} \label{eq:s-t-eq}
  u(s) = u(t) - c_d (t - s) + q_{s,t} \max\Set{0, \E[p(\xi_{s,t}) - c_d \eta(n(\xi_{s,t})) - u(t) + c_d (t - \xi_{s,t})]}.
\end{equation}

The dispatch rate at time $r$ can be written as:
\[\rho(r) = \frac{-\frac{\dd}{\dd r} \Pr[t_\text{dispatch} \ge r]}{\Pr[t_\text{dispatch} \ge r]}.\]
Solve this ODE with boundary condition $\Pr[t_\text{dispatch} \ge s] = 1$ to get
\[\Pr[t_\text{dispatch} \ge r] = \exp(- \int_s^r \rho(x) \dd x).\]
Thus,
\[q_{s,t} = 1 - \Pr[t_\text{dispatch} \ge t] = 1 - \exp(- \int_s^t \rho(x) \dd x) \le \int_s^t \rho(x) \dd x \le \frac{\lambda_r}{n^*} (t - s),\]
where the last inequality follows from \eqref{eq:rho(t)}, the definition of $\rho$, with $n(t) \ge n^*$ for $t \in [t_1, t_2)$.
This implies that as $t - s \to 0$, we have $q_{s,t} \to 0$, and further by \eqref{eq:s-t-eq}, $u(s) - u(t) \to 0$. In other words, $u(t)$ is continuous in $[t_1, t_2]$.

\eqref{eq:s-t-eq} also implies that
\[\frac{u(s) - u(t)}{t - s} = -c_d + \frac{q_{s,t}}{t - s} \max\Set{0, \E[p(\xi_{s,t}) - c_d \eta(n(\xi_{s,t})) - u(t) + c_d (t - \xi_{s,t})]}.\]
Taking the left and right limits, we have
\begin{equation*}
  {- \dot u(t)} = - c_d + \rho(t) \max\Set{0, p(t) - c_d \eta(N(t)) - u(t)}.
\end{equation*}
This derivative represents the right derivative at $t_1$, the left derivative at $t_2$, and both left and right derivatives at $t \in (t_1, t_2)$. We use $N(t)$ instead of $n(t)$ to accommodate the case of $t = t_2$, because $\lim_{\xi \nearrow t_2} n(\xi) = N(t_2) \ne n(t_2) = 0$.

\section{Proof of Theorem \ref{thm:suf_cond}} \label{sec:pf_suf_cond}

For $t \in [t_1, t_2]$, let
\[g(t) \triangleq p(t) - c_d \eta(N(t)) - u(t)\]
be the difference between a driver's net earning $p(t) - c_d \eta(N(t))$ from an immediate trip (i.e. if she accepts a trip dispatch that she has just received), and the driver's optimal continuation payoff $u(t)$ (if she does not have a trip dispatch in hand).
We know $g(t)$ is continuous on $[t_1, t_2)$. When $g(t) \ge 0$, the driver's best-response is to remain online %
and accept dispatches from the platform if she receives one.
From \labelcref{eq:d(p_d)<c_d,eq:-du}, we have
\begin{equation}
  \dot g(t) < \max\{0, \rho(t) g(t)\},~\forall t \in (t_1, t_2). \label{eq:dg}
\end{equation}

We claim that $g(t_2) > 0$ implies $g(t) > 0$ for all $t\in [t_1, t_2)$, meaning that deviating from $\sigma_{\bar p, \ubar p}$ and going offline at times $t \in [t_1, t_2]$ is not useful. This completes the proof that $\sigma_{\bar p, \ubar p}$ forms a Nash equilibrium among drivers.
To see this, assume towards a contradiction that there exists $t' \in [t_1, t_2)$ such that $g(t') \leq 0$. With \eqref{eq:dg}, we know $\dot g(t') < 0$. As a result, $g(t) < 0$ and $\dot g(t) < 0$ hold for all $t \in [t', t_2)$. Further, %
the continuity of $g$ implies $g(t_2) < 0$, which contradicts %
$g(t_2) > 0$.

\smallskip

What is left to show is $g(t_2) > 0$. We prove this by contradiction, showing that if $g(t_2) \leq 0$, then, given the two conditions in the theorem, a driver's continuation payoff will be strictly higher after one cycle period, i.e., $u(t_0) < u(t_2)$. This cannot happen since the cycle repeats itself, and thus $u(t_2) = u(t_0)$ must hold. We discuss the cases of $g(t_2) = 0$ and $g(t_2) < 0$ separately.
\newcommand{\uhat}{\hat{u}}
\newcommand{\utilde}{\tilde{u}}
\newcommand{\gtilde}{\tilde{g}}
\newcommand{\ghat}{\hat{g}}

\paragraph{Case 1: $g(t_2) = 0$.}
Consider any function $\utilde(\cdot)$ such that $\utilde(t_2) = p(t_2) - c_d \eta(N(t_2))$. 
If $\utilde(t)$ represents the optimal continuation payoff of a driver at time $t$ when $\sigma_{\bar p, \ubar p}$ is adopted by every other driver, then (i) Equation \eqref{eq:-du} is satisfied for all $t \in (t_1, t_2)$, (ii) $\utilde(t+T) = \utilde(t)$ for all $t$, and (iii) $\utilde(t)$ is continuous.
Define
\[\gtilde(t) \triangleq p(t) - c_d \eta(N(t)) -\utilde(t).\]
We know that $\gtilde(t)$ is continuous on $[t_1, t_2]$, $\gtilde(t_2) = 0$, and that \eqref{eq:dg} holds for $\gtilde$ for all $t \in (t_1, t_2)$.
We claim that for all $t \in [t_1, t_2)$, $\gtilde(t) > 0$.
Otherwise, if $\gtilde(t') \le 0$ for some time $t' \in [t_1, t_2)$, then by \eqref{eq:dg}, $\frac{\dd}{\dd t} \gtilde(t) \big|_{t=t'} < 0$, and thus for all $t \in (t', t_2)$, we have $\gtilde(t), \frac{\dd}{\dd t} \gtilde(t) < 0$. The continuity of $\gtilde(t)$ then implies $\gtilde(t_2) < 0$, contradicting the assumption that $\gtilde(t_2) = 0$.

Given $\gtilde(t) > 0$ for all $t \in [t_1, t_2)$, \eqref{eq:-du} can be simplified as
\begin{equation} \label{eq:-du_sim}
  {-\frac{\dd}{\dd t} \utilde(t)}
  = \rho(t) (p(t) - c_d \eta(n(t)) - \utilde(t)) - c_d.
\end{equation}
When $\utilde(t) = h(t)$ (as defined in \eqref{eq:h(t)}), the RHS of \eqref{eq:-du_sim} is zero. As a result, for any $t \in (t_1, t_2)$,
\[\utilde(t) > h(t) \iff - \frac{\dd}{\dd t} \utilde(t) < 0.\]

If $\utilde(t_1) \geq \utilde(t_2)$, then $- \frac{\dd}{\dd t} \utilde(t) \geq 0$ must hold for some $t \in [t_1, t_2)$. This means that we must have $\utilde(t) \leq h(t)$ at some $t \in [t_1, t_2)$.
Now consider $t' \triangleq \min \{t \in [t_1, t_2) \given \utilde(t) \leq h(t) \}$.
The continuity of $\utilde(t)$ and $h(t)$ implies that $t'$ exists and that $\utilde(t') = h(t')$.
For all $\tau \in [t_1, t')$, we have $\utilde(t) > h(t)$ and $- \frac{\dd}{\dd t} \utilde(t) < 0$. 
This implies that
\[\utilde(t_1) \leq \utilde(t') = h(t') \leq \max_{t \in [t_1, t_2]} h(t).\]
Therefore, either $\utilde(t_1) \le \max_{t \in [t_1, t_2]} h(t)$ or $\utilde(t_1) < \utilde(t_2)$.
With \eqref{eq:strange_condition}, i.e. $\max_{t \in [t_1, t_2]} h(t) < p(t_0) - c_d \eta(N(t_0)) + c_d (t_1 - t_0)$, we know either
\[\utilde(t_1) < p(t_0) - c_d \eta(N(t_0)) + c_d (t_1 - t_0)\]
or
\[\utilde(t_1) < \utilde(t_2) = p(t_2) - c_d \eta(N(t_2)).\]
Since $p(t_0) = p(t_2)$, $N(t_0) = N(t_2)$, and $c_d (t_1 - t_0) \ge 0$, we have
\[\utilde(t_1) < p(t_2) - c_d \eta(N(t_2)) + c_d (t_1 - t_0).\]
Then $\utilde(t_0) = \utilde(t_1) - c_d (t_1 - t_0) < p(t_2) - c_d \eta(N(t_2)) = \utilde(t_2) = \utilde(t_0)$, a contradiction.
This implies that the optimal continuation payoff $u(t)$ cannot satisfy $p(t_2) - c_d \eta(N(t_2)) -u(t_2) = 0$, i.e. $g(t_2) = 0$.

  \paragraph{Case 2: $g(t_2) < 0$.}
  We now consider a continuous function $\uhat(\cdot)$ such that $\uhat$ satisfies \eqref{eq:-du}, and that
  \[\uhat(t_2) > p(t_2) - c_d \eta(N(t_2)) = \utilde(t_2),\]
  meaning that the corresponding $\ghat(t) \triangleq p(t) - c_d \eta(n(t)) -\uhat(t)$ satisfies $\ghat(t_2) < 0$.
  Then, for all $t \in [t_1, t_2]$, $\uhat(t) > \utilde(t)$ because the two functions %
  must not cross; otherwise, since they follow the same differential equation, their derivatives at the intersection point are the same, and thus they have the same value and derivative from then on until $t_2$.
  Therefore, by \eqref{eq:-du}, for all $t \in (t_1, t_2)$,
  \[- \frac\dd{\dd t} \uhat(t) < - \frac{\dd}{\dd t} \utilde(t).\]
  Taking the integration from $t_1$ to $t_2$, we have
  \[\uhat(t_1) - \uhat(t_2) < \utilde(t_1) - \utilde(t_2)\]
  and thus
  \[\uhat(t_0) - \uhat(t_2) < \utilde(t_0) - \utilde(t_2) < 0.\]
  Since $u(t_0) = u(t_2)$, it cannot be the case that $u(t_2) > p(t_2) - c_d \eta(N(t_2))$, i.e., $g(t_2) < 0$.
  
  \smallskip
  
  Since both cases lead to contradictions, we know that $g(t_2) > 0$, which completes the proof.

\section{The Examples are Stable Cycles} \label{appx:proof_prop_cycles_are_stable}

To apply \Cref{thm:suf_cond} we only need to verify conditions \ref{cond:repeat}, \eqref{eq:d(p_d)<c_d}, and \eqref{eq:strange_condition}.

Condition \ref{cond:repeat} staets that, within each period, the numbers of drivers arrived and dispatched are the same. During $t \in [t_0, t_1]$, all drivers are offline, so there are no dispatches. During $t \in [t_1, t_2]$, with $c_r = 0$, the number of dispatches drivers is
\[\int_{t_1}^{t_2} \lambda_r \bar F(p(t)) \dd t.\]
For the symmetric cycle, the total driver arrival is $\lambda_d T = 1 \times 12 = 12$, and the total rider request is
\[\int_{t_1}^{t_2} 5 \bar F(0.9 - 0.1 (t - t_1)) \dd t = 5 \int_6^{12} (0.1 t - 0.5) \dd t = 12.\]
For the asymmetric cycle, the total driver arrival is $\lambda_d T = 1 \times 6 = 6$, the total rider request is
\[\int_{t_1}^{t_2} 5 \bar F(0.9 - 0.1 (t - t_1)) \dd t = 5 \int_2^6 (0.1 t - 0.1) \dd t = 6.\]

To verify Condition \eqref{eq:d(p_d)<c_d}, we show that $\frac{\dd}{\dd t} (p(t) - c_d \eta(n(t))) < 0 \le c_d$.
For the symmetric cycle,
\begin{align*}
  p(t) - c_d \eta(n(t)) & = \frac{15 - t}{10} - \frac{9 \sqrt3}{200 \sqrt{8 + 14 t - t^2}}, \\
  \frac{\dd}{\dd t} (p(t) - c_d \eta(n(t))) & = -\frac1{10} - \frac{9 \sqrt3 (t - 7)}{200 (57 - (t - 7)^2)^{3/2}}, \\
  \frac{\dd^2}{\dd t^2} (p(t) - c_d \eta(n(t))) & = -\frac{9 \sqrt3 (57 + 2 (t - 7)^2)}{200 (57 - (t - 7)^2)^{5/2}} < 0 \text{ for } 6 < t < 12,
\end{align*}
i.e., $\frac{\dd}{\dd t} (p(t) - c_d \eta(n(t)))$ is decreasing, so we only need to check at $t = 6$: $\left. \frac{\dd}{\dd t} (p(t) - c_d \eta(n(t))) \right|_{t=6} = -0.0998 < 0 \le c_d$.
For the asymmetric cycle,
\begin{align*}
  p(t) - c_d \eta(n(t)) & = \frac{11 - t}{10} - \frac8{25 \sqrt{10 + 6 t - t^2}}, \\
  \frac{\dd}{\dd t} (p(t) - c_d \eta(n(t))) & = -\frac1{10} - \frac{8 (t - 3)}{25 (19 - (t - 3)^2)^{3/2}}, \\
  \frac{\dd^2}{\dd t^2} (p(t) - c_d \eta(n(t))) & = -\frac{8 (19 + 2 (t - 3)^2)}{25 (19 - (t - 3)^2)^{5/2}} < 0 \text{ for } 2 < t < 6,
\end{align*}
i.e., $\frac{\dd}{\dd t} (p(t) - c_d \eta(n(t)))$ is decreasing, so we only need to check at $t = 2$: $\left. \frac{\dd}{\dd t} (p(t) - c_d \eta(n(t))) \right|_{t=2} = -0.0958 < 0 \le c_d$.

Condition \eqref{eq:strange_condition} requires that %
that the maximum of $h(t)$ among all $t \in (t_1, t_2)$ is lower than %
the net payoff from a trip at time $t_0$ plus the cost of waiting from $t_0$ to $t_1$: $c_d (t_1 - t_0)$.
For the symmetric cycle illustrated in \Cref{fig:stable_cycle_1}, we have
\[p(t_0) - c_d \eta(N(t_0)) = 0.2862, \quad c_d (t_1 - t_0) = 0.18, \quad \max_{t \in (t_1, t_2)} h(t) = 0.4097 < 0.2862 + 0.18.\]
For the asymmetric cycle illustrated in \Cref{fig:stable_cycle_2},
\[p(t_0) - c_d \eta(N(t_0)) = 0.3988, \quad c_d (t_1 - t_0) = 0.16, \quad \max_{t \in (t_1, t_2)} h(t) = 0.3846 < 0.3988 + 0.16.\]

\section{Statement and Proof of Lemma \ref{prop:FFFF}}
\label{appx:FFFF}

In this section we prove a technical lemma used in the proof of \Cref{thm:lower_payoff}.

\begin{lemma} \label{prop:FFFF}
  If $\bar F(v)^{-1}$ is convex, i.e., for all $v$,
  \[\frac{\dd^2}{\dd v^2} \frac{1}{\bar F(v)} \ge 0,\]
  then for all $t_1$, $t_2$, and $p_r(\cdot)$,
  \[\frac{\int_{t_1}^{t_2} \bar F(p_r(t)) \dd t}{t_2 - t_1} \le \bar F \left(\frac{\int_{t_1}^{t_2} \bar F(p_r(t)) p_r(t) \dd t}{\int_{t_1}^{t_2} \bar F(p_r(t)) \dd t}\right).\]
\end{lemma}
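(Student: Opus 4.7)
The plan is to recognize the stated inequality as a direct consequence of Jensen's inequality applied to the convex function $g(v) := \bar F(v)^{-1}$ with a carefully chosen weighting. The essential observation is that the weighting that appears in the right-hand side (namely $\bar F(p_r(t))$) is precisely the one that makes the product $\bar F(p_r(t)) \cdot g(p_r(t))$ collapse to the constant $1$, so Jensen's inequality produces an expression whose numerator is simply $t_2 - t_1$.

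Concretely, I would proceed as follows. First, define the probability measure on $[t_1, t_2]$ with density
\[
  \mu(t) \;=\; \frac{\bar F(p_r(t))}{\int_{t_1}^{t_2} \bar F(p_r(s))\,\dd s},
\]
which is well-defined and positive since $\bar F$ is a survival function. Next, apply Jensen's inequality to the convex function $g = 1/\bar F$ and the random variable $p_r$ under this measure:
\[
  g\!\left( \int_{t_1}^{t_2} \mu(t)\, p_r(t)\,\dd t \right)
  \;\le\; \int_{t_1}^{t_2} \mu(t)\, g(p_r(t))\,\dd t.
\]
Substituting the definition of $\mu$, the left-hand argument becomes $\frac{\int \bar F(p_r(t)) p_r(t)\,\dd t}{\int \bar F(p_r(t))\,\dd t}$, while on the right-hand side the factor $\bar F(p_r(t)) \cdot g(p_r(t))$ simplifies to $1$, yielding
\[
  g\!\left( \frac{\int_{t_1}^{t_2} \bar F(p_r(t))\, p_r(t)\,\dd t}{\int_{t_1}^{t_2} \bar F(p_r(t))\,\dd t} \right)
  \;\le\; \frac{t_2 - t_1}{\int_{t_1}^{t_2} \bar F(p_r(t))\,\dd t}.
\]

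Finally, I would take reciprocals of both sides. Since $g$ and $\bar F$ are positive with $g = 1/\bar F$, the reciprocal of the left-hand side is exactly $\bar F$ applied to the argument, and inverting flips the inequality:
\[
  \bar F\!\left( \frac{\int_{t_1}^{t_2} \bar F(p_r(t))\, p_r(t)\,\dd t}{\int_{t_1}^{t_2} \bar F(p_r(t))\,\dd t} \right)
  \;\ge\; \frac{\int_{t_1}^{t_2} \bar F(p_r(t))\,\dd t}{t_2 - t_1},
\]
which is the claimed inequality. I do not foresee a real obstacle: the only mild subtlety is recognizing that the weighting by $\bar F(p_r(t))$ is the right choice, because it is the unique weighting under which Jensen on $g = 1/\bar F$ produces the clean constant $1$ in the integrand. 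Everything else (positivity of $\bar F$, the reciprocal flip, and the measurability/integrability of $p_r$) is routine.
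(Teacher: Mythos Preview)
Your proof is correct and is considerably cleaner than the paper's. The paper does not recognize the statement as a one-line Jensen inequality with the weight $\bar F(p_r(t))$; instead it first establishes the inequality \emph{locally}, via a second-order Taylor expansion showing that for $t_2 \to t_1$ the difference
\[
  \frac{t_2 - t_1}{\int_{t_1}^{t_2} \bar F(p_r)\,\dd t} - \bar F\!\left(\frac{\int_{t_1}^{t_2} \bar F(p_r) p_r\,\dd t}{\int_{t_1}^{t_2} \bar F(p_r)\,\dd t}\right)^{-1}
\]
is $\tfrac{1}{24}\,(1/\bar F)''(p_r(t_1))\,p_r'(t_1)^2\,(t_2-t_1)^2 + O((t_2-t_1)^3) \ge O((t_2-t_1)^3)$, and then globalizes by partitioning $[t_1,t_2]$ into $m$ equal pieces, invoking convexity of $1/\bar F$ to split the global expression into a convex combination of the local ones, and sending $m \to \infty$ so the cubic errors vanish. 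Your direct application of Jensen with the measure $\mu \propto \bar F(p_r)$ bypasses both the Taylor computation and the partition-and-limit step entirely; the paper's approach, by contrast, implicitly requires $p_r$ to be differentiable (for the Taylor expansion) and involves substantially more bookkeeping for no apparent gain.
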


\begin{proof}
  If $t_2 \to t_1$, %
  \begin{equation} \label{eq:FFFFv}
    \begin{split}
      & \quad \frac{t_2 - t_1}{\int_{t_1}^{t_2} \bar F(p_r(t)) \dd t} - \bar F \left(\frac{\int_{t_1}^{t_2} \bar F(p_r(t)) p_r(t) \dd t}{\int_{t_1}^{t_2} \bar F(p_r(t)) \dd t}\right)^{-1} \\
      & = \frac{2 \bar F(p_r(t_1))^2 - \bar F(p_r(t_1)) \bar F''(p_r(t_1))}{24 \bar F(p_r(t_1))^3} p_r'(t_1)^2 (t_2 - t_1)^2 + O((t_2 - t_1)^3) \\
      & = \left.\frac{\dd^2}{\dd v^2} \frac{1}{\bar F(v)}\right|_{v = p_r(t_1)} \frac{1}{24} p_r'(t_1)^2 (t_2 - t_1)^2 + O((t_2 - t_1)^3) \\
      & \ge O((t_2 - t_1)^3).
    \end{split}
  \end{equation}
  
  For the general case of $t_2$ and $t_1$, let $t_1 = x_0 < x_1 < \dots < x_{m-1} < x_m = t_2$, where $x_k = \frac{m - k}{m} t_1 + \frac{k}{m} t_2$. Let $m \to \infty$.
  \begin{align*}
    \frac{t_2 - t_1}{\int_{t_1}^{t_2} \bar F(p_r(t)) \dd t}
    & = \sum_{k=0}^{m-1} \frac{\int_{x_k}^{x_{k+1}} \bar F(p_r(t)) \dd t}{\int_{t_1}^{t_2} \bar F(p_r(t)) \dd t} \frac{x_{k+1} - x_k}{\int_{x_k}^{x_{k+1}} \bar F(p_r(t)) \dd t} \\
  \end{align*}
  By convexity of $\bar F(\cdot)^{-1}$,
  \begin{align*}
    \bar F \left(\frac{\int_{t_1}^{t_2} \bar F(p_r(t)) p_r(t) \dd t}{\int_{t_1}^{t_2} \bar F(p_r(t)) \dd t}\right)^{-1}
    & = \bar F \left(\sum\limits_{k=0}^{m-1} \frac{\int_{x_k}^{x_{k+1}} \bar F(p_r(t)) \dd t}{\int_{t_1}^{t_2} \bar F(p_r(t)) \dd t} \frac{\int_{x_k}^{x_{k+1}} \bar F(p_r(t)) p_r(t) \dd t}{\int_{x_k}^{x_{k+1}} \bar F(p_r(t)) \dd t}\right)^{-1} \\
    & \le \sum_{k=0}^{m-1} \frac{\int_{x_k}^{x_{k+1}} \bar F(p_r(t)) \dd t}{\int_{t_1}^{t_2} \bar F(p_r(t)) \dd t} \bar F \left( \frac{\int_{x_k}^{x_{k+1}} \bar F(p_r(t)) p_r(t) \dd t}{\int_{x_k}^{x_{k+1}} \bar F(p_r(t)) \dd t}\right)^{-1}.
  \end{align*}
  Taking the difference of the above two equations,
  \begin{align*}
    & \quad \frac{t_2 - t_1}{\int_{t_1}^{t_2} \bar F(p_r(t)) \dd t} - \bar F \left(\frac{\int_{t_1}^{t_2} \bar F(p_r(t)) p_r(t) \dd t}{\int_{t_1}^{t_2} \bar F(p_r(t)) \dd t}\right)^{-1} \\
    & \ge \sum_{k=0}^{m-1} \frac{\int_{x_k}^{x_{k+1}} \bar F(p_r(t)) \dd t}{\int_{t_1}^{t_2} \bar F(p_r(t)) \dd t} \left( \frac{x_{k+1} - x_k}{\int_{x_k}^{x_{k+1}} \bar F(p_r(t)) \dd t} - \bar F \left( \frac{\int_{x_k}^{x_{k+1}} \bar F(p_r(t)) p_r(t) \dd t}{\int_{x_k}^{x_{k+1}} \bar F(p_r(t)) \dd t}\right)^{-1} \right) \\
    & = \sum_{k=0}^{m-1} \frac{\int_{x_k}^{x_{k+1}} \bar F(p_r(t)) \dd t}{\int_{t_1}^{t_2} \bar F(p_r(t)) \dd t} O((x_{k+1} - x_k)^3) \tag*{by \eqref{eq:FFFFv}} \\
    & = 0 \quad \text{  as } m \to \infty.
  \end{align*}
  Finally, we take the inverse on both terms, completing the proof.
\end{proof}

\section{An Example}

\begin{example} \label{exmp:cycle_real}
Consider a %
setting where the arrival rates of drivers and riders are $\lambda_d = \SI{3}{\per\minute}$ and $\lambda_r = \SI{12}{\per\minute}$. Drivers' and riders' costs of time are $c_d = \$ \SI{20}{\per\hour} = \$ \frac13 \, \si{\per\minute}$ and $c_r = \$ 0 \, \si{\per\minute}$.
The pick-up time is parametrized by $\tau = \SI{15}{\minute}$ and $\alpha = 1/3$, so $\eta(n) = \tau n^{-\alpha} = 15 n^{-1/3} \, \si{\minute}$. Riders' value follows an exponential distribution with c.d.f.\ $F(v) = 1 - e^{-v / \$ 30}$. With the above parameters, the SOSS is given by $n^* = 15^{3/4} = 7.622$ and $p^* = \$ 30 \ln(4) = \$ 41.59$.
We further suppose the price change rate is restricted by $\ell_- = \ell_+ = \$ \SI{2}{\per\minute}$. 
  
Price cycles exist in this setting, and the following is one of the infinitely many. Let $t_0 = \SI{0}{\minute}$, $t_1 = \SI{10}{\minute}$, and $t_2 = \SI{20}{\minute}$. Thus, both the increasing and decreasing phases take $10$ minutes. Let $\bar p = \$ 30 \ln(3 (e^{2/3} - 1)) = \$ 31.35$, $\ubar p = \bar p - \$ 20 = \$ 11.35$, and $n(t_1) = 75$.
\Cref{fig:cycle_real} shows this cycle.  Note that the price in this cycle is always below the price in SOSS, which provides a definitive illustration of lower driver welfare in a price cycle.

\begin{figure}[t!]
  \centering
  \subcaptionbox{Contour plot of $\dot n(t)$ and the SOSS.}{\quad\includegraphics[height = 0.32 \textwidth]{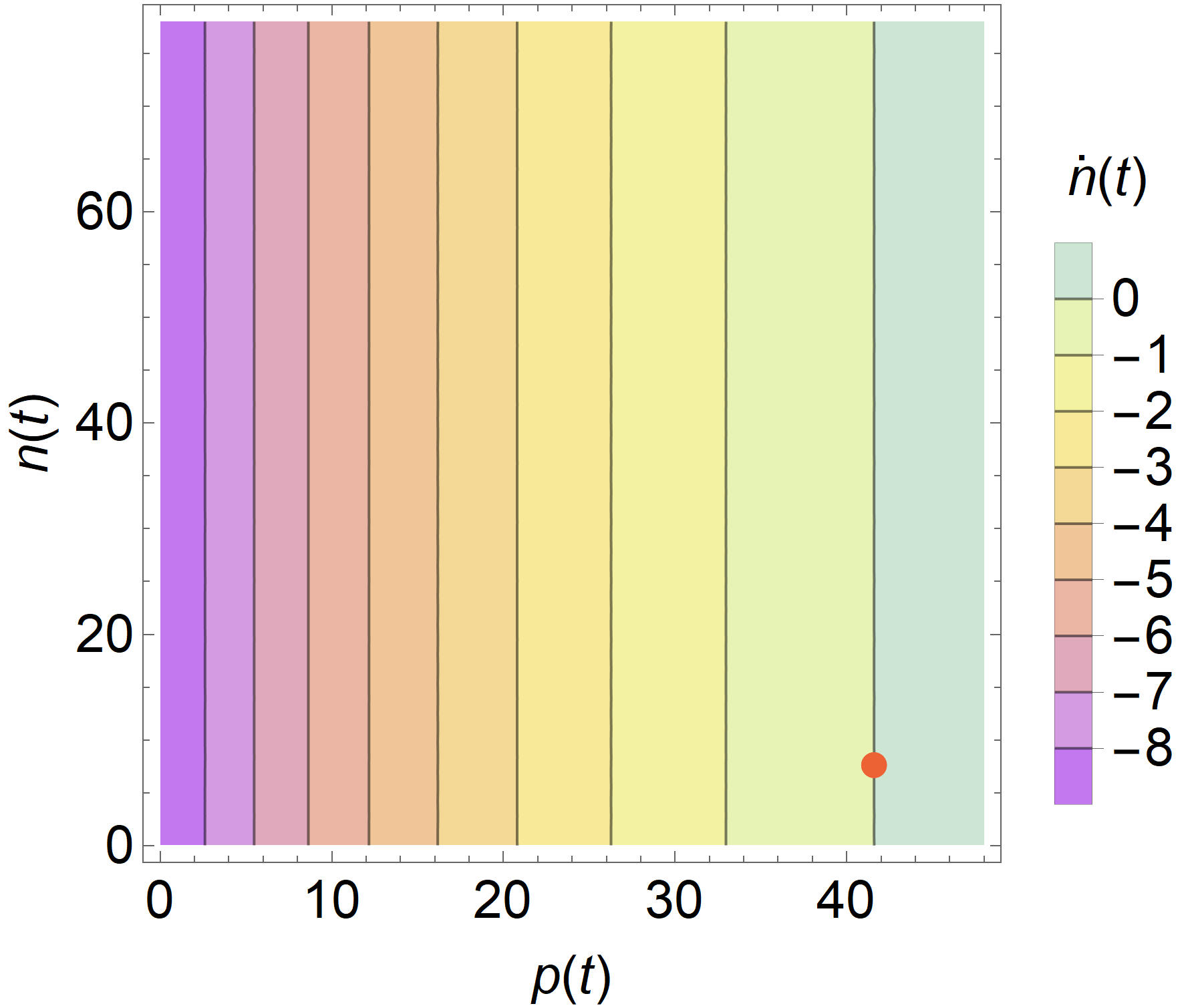}\quad}
  \hspace{3em}
  \subcaptionbox{Pricing policy and a stable cycle.}{\quad\includegraphics[height = 0.32 \textwidth]{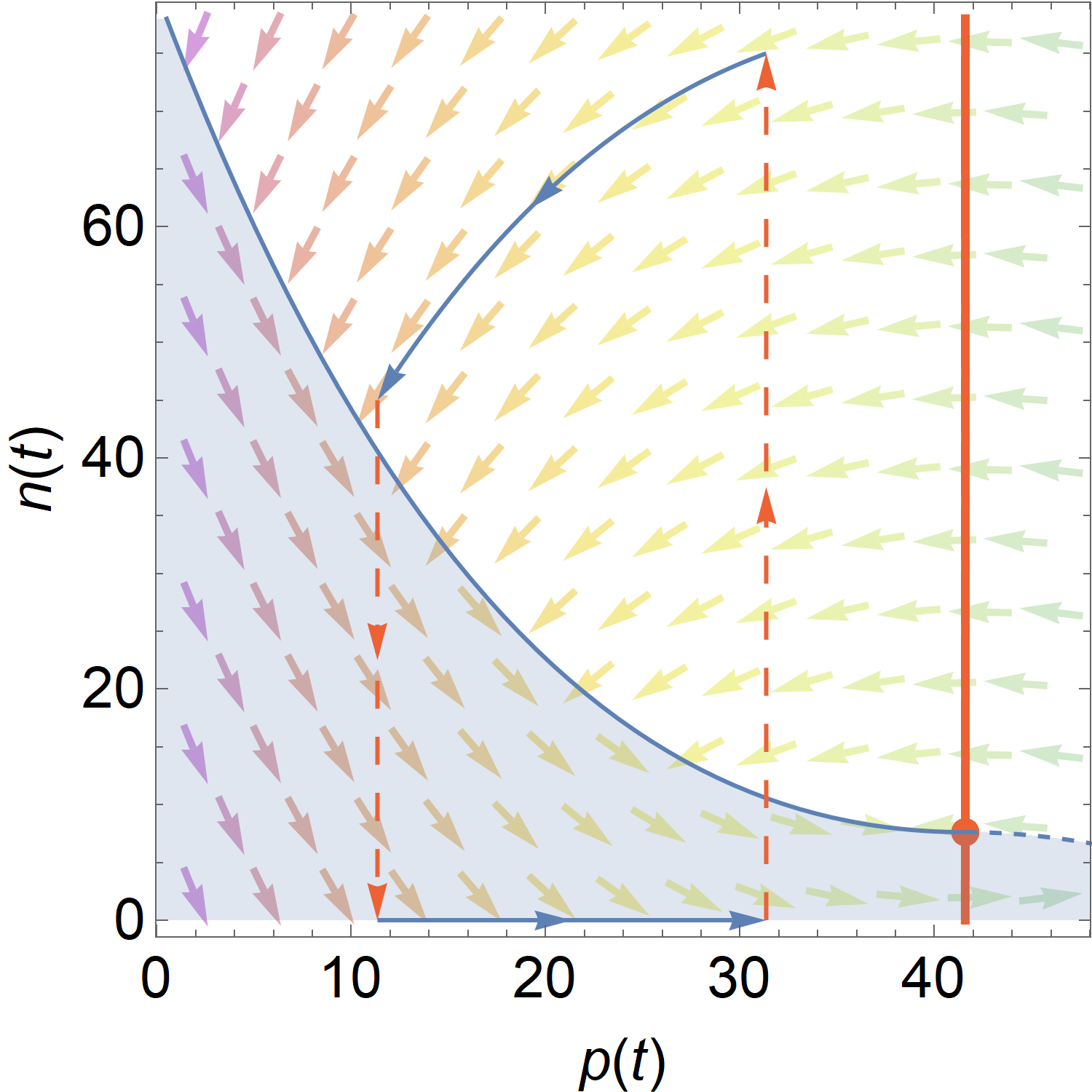}\quad}
  \caption{Illustration of the cycle discussed in \Cref{exmp:cycle_real}.
  }
  \label{fig:cycle_real}
\end{figure}
\end{example}

\section{The Impact of Smoothing Prices} \label{appx:prop_max_price_range}

In this section, we show that, when a platform requires that the prices decrease more ``smoothly'' (i.e. reducing the rate $\ell_-$ at which prices may decrease), the maximum price range $\bar p - \bar p$ decreases, meaning that stable price cycles become tend to become ``narrower''.  This highlights that smoothing prices reduces, but does not eliminate, price cycles.  However, smoothing prices too far as the drawback of making prices less responsive to market conditions, and so other approaches for eliminating price cycles are needed.

\begin{proposition} \label{prop:max_price_range}
Assume $c_r = 0$ and $V \sim \Unif[0, 1]$. The maximum price range $\bar{p} - \ubar{p}$ of any stable price cycle is $O(\sqrt{\ell_-})$.
\end{proposition}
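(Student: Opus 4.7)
My plan is to exploit the explicit parabolic form of $n(t)$ during the online phase together with the non-negativity $n(t) \geq 0$, which is enough to directly bound $\Delta = \bar p - \ubar p$. Specialising to $c_r = 0$ and $\bar F(v) = \max(1 - v, 0)$, and first assuming $\bar p \leq 1$ (so that $p(t) \leq 1$ and the demand equals $\lambda_r(1 - p(t))$ throughout $[t_1, t_2]$), the ODE $\dot n(t) = \lambda_d - \lambda_r(1 - p(t))$ combined with the linear price trajectory $p(t) = \bar p - \ell_-(t - t_1)$ integrates to
$$
n(t) = n(t_1) + \lambda_r (\bar p - p^*)(t - t_1) - \tfrac{1}{2}\, \lambda_r \ell_- (t - t_1)^2,
$$
where $p^* = 1 - \lambda_d/\lambda_r$ is the SOSS price. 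This is a downward parabola in $t$ whose unrestricted maximum is attained when $p(t) = p^*$.

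Writing $\gamma := \bar p - p^*$ and $\delta := p^* - \ubar p$ so that $\Delta = \gamma + \delta$, I first handle the generic case $\gamma, \delta \geq 0$ (the cycle straddles the SOSS price). Direct evaluation of the parabola at its vertex and at $t = t_1, t_2$ yields the two identities
$$
\hat n - n(t_1) = \frac{\lambda_r \gamma^2}{2 \ell_-}, \qquad \hat n - n(t_2) = \frac{\lambda_r \delta^2}{2 \ell_-}.
$$
Because $n(t_1), n(t_2) \geq 0$, each left-hand side is at most $\hat n$, so $\gamma, \delta \leq \sqrt{2 \hat n \ell_- / \lambda_r}$ and therefore
$$
\bar p - \ubar p = \gamma + \delta \;\leq\; 2 \sqrt{\frac{2 \hat n \ell_-}{\lambda_r}} \;=\; O(\sqrt{\ell_-}).
$$

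In the edge case $\gamma < 0$ (where $\bar p < p^*$ and $n(t)$ is monotonically decreasing on $[t_1, t_2]$, so $\hat n = n(t_1)$), combining the periodicity identity $\hat n - n(t_2) = \lambda_d \Delta / \ell_+$ with the parabolic computation $\hat n - n(t_2) = \lambda_r (\delta^2 - \gamma^2) / (2 \ell_-) > \lambda_r \Delta^2 / (2 \ell_-)$ gives the even tighter bound $\Delta = O(\ell_-)$, which is a fortiori $O(\sqrt{\ell_-})$; the symmetric case $\delta < 0$ is analogous. The principal subtlety is handling $\bar p > 1$: in that case, no dispatches occur on the sub-interval where $p(t) > 1$, and one applies the same parabolic argument to the effective sub-interval $\{t : p(t) \leq 1\}$, bounding $\delta$ via the parabolic identity on this smaller interval and verifying the bound in terms of $\hat n$ still controls $\Delta$. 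This reduction, together with verifying that the sign-casing above covers all stable cycles, is the main remaining work.
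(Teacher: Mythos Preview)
Your proposal is correct and follows essentially the same route as the paper. Both arguments integrate the linear ODE $\dot n = \lambda_d - \lambda_r(1-p(t))$ over the online phase to obtain the parabolic profile of $n(t)$, derive the identities $\hat n - n(t_1) = \lambda_r\gamma^2/(2\ell_-)$ and $\hat n - n(t_2) = \lambda_r\delta^2/(2\ell_-)$ (the paper phrases these via $N(t_1), N(t_2)$, which coincide with $n(t_1), n(t_2^-)$), and then invoke nonnegativity of the driver count to bound $\gamma,\delta \le \sqrt{2\hat n\ell_-/\lambda_r}$, giving $\bar p - \ubar p \le \sqrt{8\hat n\ell_-/\lambda_r}$. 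You are in fact more careful than the paper: the paper implicitly assumes $\ubar p < p^* < \bar p \le 1$ and never treats the edge cases $\gamma<0$, $\delta<0$, or $\bar p>1$ that you flag and handle. (For the record, $\delta<0$ is actually ruled out by market clearing, and $\bar p>1$ cannot occur once $\ell_-$ is small enough that $\hat n\ell_- < \lambda_r(1-p^*)^2/2$, which is consistent with the paper's tacit assumption that $\hat n$ is bounded.)
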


\begin{proof}

When $c_r = 0$,
$\dot n(t) = \lambda_d - \lambda_r (1 - p(t))$.
Let $\hat{t} \in (t_1, t_2)$ denote the time such that $p(\hat{t}) = p^\ast = \bar{F}^{-1}(\lambda_d / \lambda_r)$. We know, $\dot n(t) > 0$ for $t \in (t_1, \hat{t})$,  $\dot n(t) < 0$ for $t \in (\hat{t}, t_2)$, and $n(\hat{t}) = \hat{n}$. At time $t_2$, the total number of drivers in the region $N(t_2)$ can be written as:

\begin{align}
  N(t_2) 
  & = \hat n + \int_{\hat{t}}^{t_2} \dot n(t) \dd t \notag \\
  & = \hat n + \int_{\hat{t}}^{t_2} \left( \lambda_d - \lambda_r + \lambda_r p(t) \right) \dd t \notag \\
  & = \hat n - \frac{1}{\ell_-} \int_{p^\ast}^{\ubar p} \left( \lambda_d - \lambda_r + \lambda_r p \right) \dd p  \label{eq:n_dt_dp} \\ 
  & = \hat n - \frac{1}{2 \ell_-} \left(1 - \frac{\lambda_d}{\lambda_r} - \ubar p\right)^2. \label{eq:eliminate_p^*}
\end{align}

Equation \eqref{eq:n_dt_dp} follows from the fact that $\dd p = - \ell_- \dd t$ %
for $t \in (t_1, t_2)$,
and \eqref{eq:eliminate_p^*} follows from $p^* = \bar{F}^{-1}(\lambda_d / \lambda_r) = 1 - \lambda_d/\lambda_r$.
Similarly, the number of drivers in the region at time $t_1$ satisfies:

\begin{align}
    N(t_1) = \hat n - \frac{1}{2 \ell_-} \left(1 - \frac{\lambda_d}{\lambda_r} - \bar p\right)^2.
\end{align}

$N(t_1)$ and $N(t_2)$ need to be non-negative. This implies 
\begin{align}
    \ubar p & \geq 1 - \lambda_d / \lambda_r - \sqrt{\frac{2 \hat{n} \ell_- }{\lambda_r}}, \\
    \bar p & \leq 1 - \lambda_d / \lambda_r + \sqrt{\frac{2 \hat{n} \ell_- }{\lambda_r}}.
\end{align}
As a result, when the maximum number of drivers $\hat{n}$ is bounded, the price range $\bar p - \ubar p $ satisfies:
\begin{align}
  \bar p - \ubar p \leq \sqrt{\frac{8 \hat{n} \ell_- }{\lambda_r}} = O(\sqrt{\ell_-}).
\end{align}

\end{proof}

\end{document}